\newtheorem{thm}{Theorem}[section]
\newtheorem{lemma}[thm]{Lemma}
\newtheorem{theorem}[thm]{Theorem}
\title{\bf Asymptotic Approximation of Fading Mode in Neurooscillator Dynamics \thanks{
The work was carried out in part (formulation of the problem, Sections 4.1 and 4.3 in the proof of the main Theorem 3.3) at the expense of the Russian Science Foundation grant 
No. 22-11-00209, https://rscf.ru/project/22-11-00209/; 
partially (Section 4.2 in the proof of the main Theorem 3.3, Theorem 4.6) within the framework of a development programme for the Regional Scientific and Educational Mathematical Center of the Yaroslavl State University with financial support from the Ministry of Science and Higher Education of the Russian Federation (Аgreement on provision of subsidy from the federal budget No. 075-02-2024-1442).
}}
\author{M. M. Preobrazhenskaia (rita.preo@gmail.com), \\ V. K. Zelenova (verzelenowa12@gmail.com)}
\affil{Centre of Integrable Systems, P. G. Demidov Yaroslavl State University,\\Yaroslavl, Russia}
\begin{document}

\maketitle

\begin{abstract}

We consider a system consisting of two delay differential equations with a large parameter, modeling the association of a pair of neurooscillators. The unknown functions describe the changes in the normalized membrane potentials of neurons over time, with the large parameter characterizing the speed of electrical processes. The first equation is separated from the system and represents a generalized Hutchinson equation. This equation, as known, possesses periodic solutions with high peaks over the period. The second equation is also based on the generalized Hutchinson equation, but with an additional term, linking it to an oscillator satisfying the first equation. For the second equation, it is possible to asymptotically construct the so-called fading neuron mode, which is as follows: for any natural number $n$, one can adjust the parameters of the problem in such a way that the solution is asymptotically close to a periodic function with high peaks over $n$ periods, and then, after a transient process represented by decreasing peaks, becomes asymptotically small.

\end{abstract}

\bigskip

\hspace{.2cm} \textbf{Keywords:}  Delay-differential equation, asymptotics, nonlinear dynamics, generalized Hutchinson equation, fading neurooscillator, large parameter.

\section{Introduction}

Methods of nonlinear dynamics prove to be effective in modelling biological neuron~\cite{Rabinovich_06}. In particular, delay-differential equations play a significant role in this regard  \cite{Kasch_2015,Kasch_1995}. 

The nerve cell possesses a sufficiently complex structure, prompting the consideration of various simplification methods that account for the qualitative properties of the neuron in one way or another. One of the main concepts involves treating the nerve cell as equivalent to an electrical oscillatory circuit with special nonlinear elements (see, for example, \cite{
Kasch_2015,
Kasch_1995,
Hodgkin:Huxley,
GlyKolRoz2013}). 

Phenomenological models are often regarded as simplifications.
In \cite{Kasch_2015, Kasch_1995}, criteria are provided that such models must meet. Among these requirements, the most important is the condition for the presence of a stable periodic impulse regime in the system. One of the models of a single neuron with such a regime is described in \cite{hutch,umn}, and it represents a generalized Hutchinson equation, which forms the basis of the model under consideration.

Among the peculiarities of membrane potential behavior that receive attention when choosing a phenomenological model, the bursting effect stands out \cite{Izhik_2000,Glyzin2013a,ChayRinzel1985,Preob2018}. This phenomenon refers to burst-like activity, where the membrane potential function exhibits several consecutive asymptotically high peaks following a period of asymptotically small values (see Fig. \ref{pic:bur}). 
In this paper, we focus on another effect, which we refer to as the fading oscillator mode (see Fig. \ref{pic:v}). Its characteristic feature is that after several consecutive bursts, fading is observed (a more formal description will be provided below).

\begin{figure}[h]
 \centering
	\includegraphics[width=14cm]{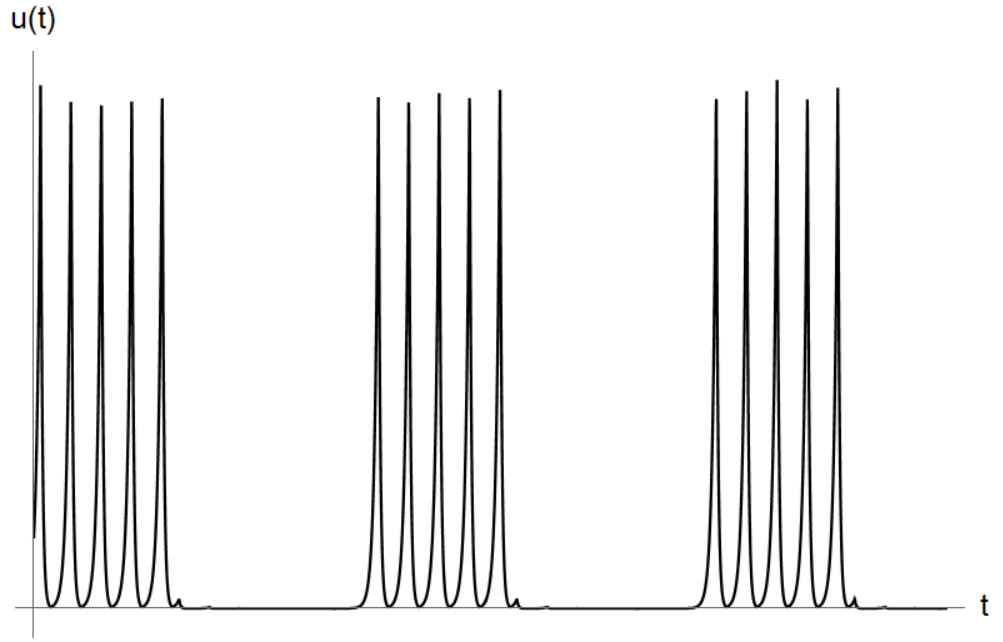}\\[-7pt]
     \centering \caption{Representation of the bursting effect.
    }\label{pic:bur}
\end{figure}

This work considers a system modelling the association of two neuro oscillators. We assume that one of them is isolated and described by a well-known equation \cite{hutch}, ensuring the existence of a periodic regime with one burst per period. At the same time, we presume that the first neuron unilaterally influences the second. In commonly accepted terms (see, for example, \cite{Brown}) in synchronization theory, this means that the first equation is considered the "driver" or "master" while the second is the "response" or "slave". Through this influence, it is possible to achieve a more complex behavior of the second neuron compared to the first. Examining the influence of one neuron on the other helps to better understand how even simple connections between neurons can lead to more complex and diverse behaviors of neural systems. This helps to identify the so-called emergent behavior of the system, which does not manifest when studying only individual elements.

We assume that the first neuron is described by a generalized Hutchinson equation
\begin{equation}
\label{eq_hutch}
    \dot{u}=\lambda f_{\alpha}(u(t-1))u.
\end{equation} 
Here, $u=u(t) > 0$ represents the normalized membrane potential, $\lambda\gg 1$ characterizes the speed of electrical processes, and the smooth function $f_{\alpha}$ satisfies the conditions
\begin{equation}
    \label{cond_f}
    f_{\alpha}(0)=1;\ \lim_{u\rightarrow +\infty} f_{\alpha}(u)=-{\alpha}\ (\alpha>0);\ f_{\alpha}^{'}(u), \ uf_{\alpha}^{''}(u)=O({u^{-2}}),\ u \rightarrow +\infty.
\end{equation}
In the work \cite{hutch}  it was proven that equation (\ref{eq_hutch}) admits a stable relaxation cycle $u^*_{\lambda}(t)=e^{\lambda x^*_{\lambda}(t)}$ of period $T^*_\lambda$, satisfying the conditions:

$$\lim_{\lambda\rightarrow +\infty} \max_t{|x^*_\lambda(t) - x^*(t)|} = 0, \ \lim_{\lambda\rightarrow +\infty} T^*_\lambda=T^*.$$ 
Here, $x^*(t)$  is a periodic piecewise-linear function defined as follows:
   $$ x^*(t)=
    \begin{cases}
        t,\ t\in [0,1],\\
        -\alpha(t-t^*),\ t\in [1, t^*+1],\\
        t-T^*,\ t\in[t^*+1,T^*],
    \end{cases}$$
$$x^*(t+T^*)=x^*(t),\quad t^*=(\alpha+1)/\alpha,\quad T^*=(\alpha+1)^2/\alpha.$$
The functions  $x^*(t)$ and $u^*_{\lambda}(t)$ are depicted in Figure \ref{pic:xu}.

 \begin{figure}[h]
 \centering
	\includegraphics[width=16cm]{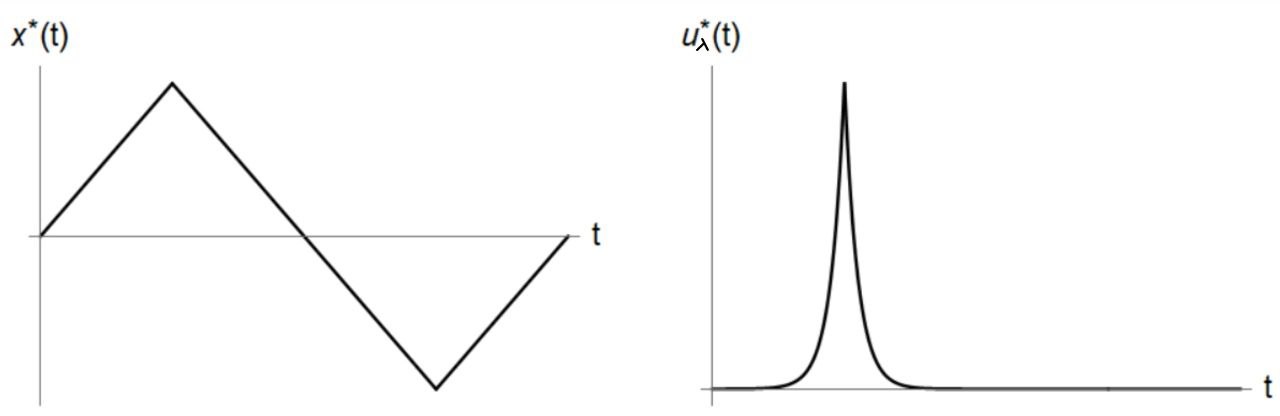}\\[-7pt]
     \centering \caption{Representation of functions $x^*(t)$ and $u^*_\lambda(t)$.
    }\label{pic:xu}

\end{figure}

Regarding the second neurooscillator, we assume that it is coupled to the first one, and its normalized membrane potential $v=v(t)$ satisfies equation 
\begin{equation}
    \label{eq_v}
     \dot{v}=\lambda(f_{\beta}(v(t-h))+g(u(t)))v.
\end{equation}
Here, the parameter $\lambda$ and the function $f_{\beta}$ have the same meaning as in equation (\ref{eq_hutch}); for the function $g$, as well as for $f_{\beta}$, the values are known at $u=0$ and as $u\to+\infty$:
\begin{equation}
\label{cond_g}
    g(0)=-\eta \ (\eta>0);\ \lim_{u\rightarrow +\infty} g(u)=\xi \ (\xi >0);\ g^{'}(u),  ug^{''}(u)=O({u^{-2}}),\ u \rightarrow +\infty.
\end{equation}
As an example of functions $f_\beta$ and $g$ satisfying conditions (\ref{cond_f}) and (\ref{cond_g}), one can provide
\begin{equation}\label{example_fg}
    f_{\alpha}(u)= \dfrac{\alpha(1-u)}{\alpha+ u},\quad g(u)=\dfrac{\xi(u-\eta)}{u+\xi}.
\end{equation}

In the works \cite{GlyPre2019_r,GlyPre2019,preob2022}  a discrete version of equation (\ref{eq_hutch}) was introduced as a phenomenological model of an isolated neuron:
 \begin{equation}
\label{eq_discr_hutch}
 \dot{u}= \lambda \mathcal{F}_{\alpha}(u(t-h))u,
 \end{equation}
where the function
$$\mathcal{F}_{\alpha}(u)=
\begin{cases}
   -\alpha, & u\in(0,1],\\
   1, & u>1,
\end{cases}$$
is chosen to be piecewise-constant with the same values as the function  $f_{\alpha}$ at $u=0$ and $u\to+\infty$.

In the works \cite{perc,pnd_vera} the relay version of equation (\ref{eq_v}) was studied, based on equation (\ref{eq_discr_hutch}):
\begin{equation}
    \label{eq_discr_v}
    \dot{v}=\lambda(\mathcal{F}_{\beta}(v(t-h))+\mathcal{G}(u))v,
\end{equation}
where
    $$\mathcal{G}(u) =  \begin{cases} 
        -\eta & u\in(0,1],\\
        \xi ,&  u>1.
\end{cases}$$

For equation (\ref{eq_discr_v}), the existence of a fading oscillator mode was proven. For any predetermined natural number $n$, one can choose parameters $\alpha$, $\beta$, $\eta$, $\xi$, $h$ such that $v$ coincides with a periodic function $\check{v}(t)$ for the first $n$ periods, and then, after a transient process, coincides with a periodic function $\hat{v}(t)$. Here, the function $\check{v}(t)$ exhibits a high peak (of order $e^\lambda$) over the period, while the function $\hat{v}(t)$ has small values (of order $e^{-\lambda}$). An illustration of such a solution is depicted in Figure \ref{pic:v}.

 \begin{figure}[h]
 \centering
	\includegraphics[width=10cm]{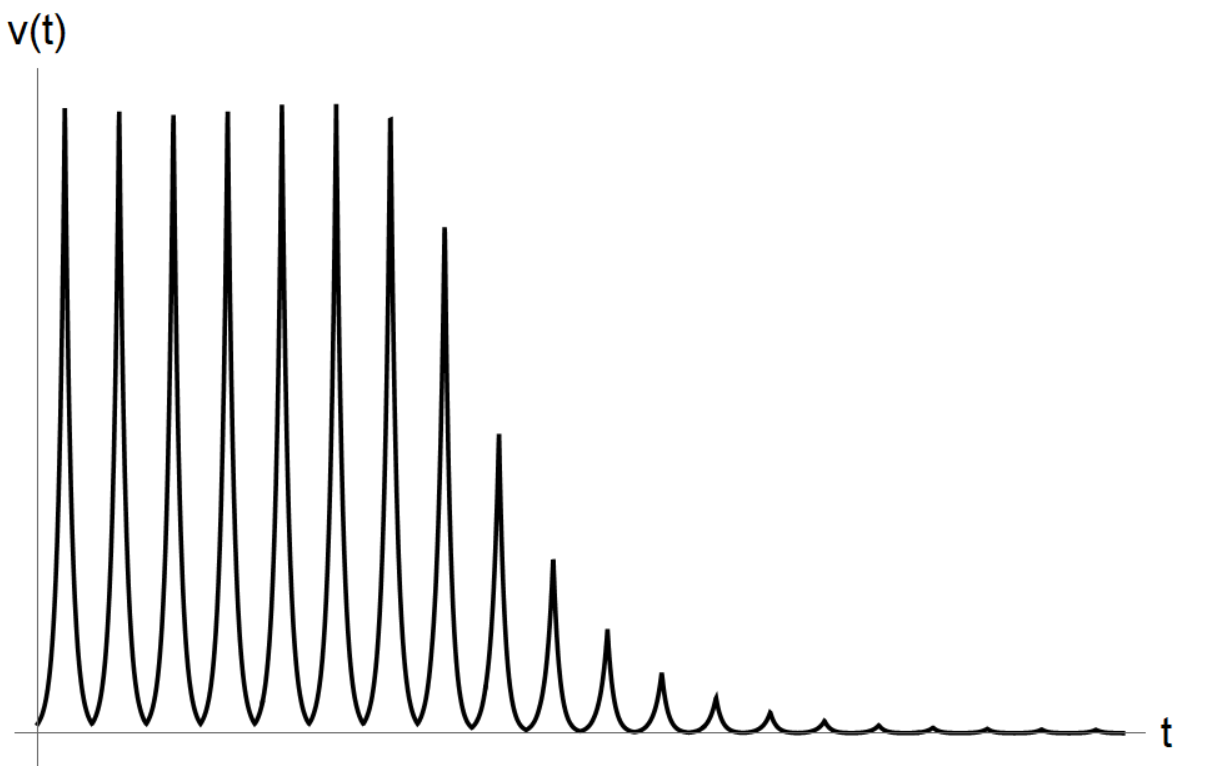}\\[-7pt]
     \centering \caption{Normalized membrane potential $v(t)$ of a fading neurooscillator.
    }\label{pic:v}

\end{figure}

The aim of this work is to analytically prove that equation (\ref{eq_v}), which represents advanced continuous version of the equation, exhibits a fading oscillator mode, namely a solution asymptotically close to the described solution of the relay equation (\ref{eq_discr_v}) over a finite interval, and tending to zero as $t\to+\infty$. This is the focus of Theorem \ref{th_exis}, formulated in Section~3.

\section{Formulation of the problem}

Consider the system of equations
   $$ \begin{cases}
        \dot{u}=\lambda f_{\alpha}(u(t-1))u,\\
        \dot{v}=\lambda(f_{\beta}(v(t-h))+g(u(t)))v.
    \end{cases}$$
The first equation of the system decouples, and as noted above, it has a periodic solution $u^*_{\lambda}(t)=e^{\lambda x^*_{\lambda}(t)}$. In the work \cite{hutch}  it is proven that the function $x^*_{\lambda}(t)$  has the form

\begin{equation}\label{x*_lambda}
   x^*_{\lambda}(t)=
    \begin{cases}
        t + O(e^{-\mu\lambda}),\ t\in [0,1-\sigma_0],\\
        1+\frac{1}{\lambda}u_1(\tau)|_{\tau = \lambda(t-1)} + O(e^{-\mu\lambda}),\ t\in [1-\sigma_0, 1+\sigma_0],\\
        -\alpha(t-t^*) + \frac{d_0}{\lambda}+O(e^{-\mu\lambda}),\ t\in [1+\sigma_0, t^*+1-\sigma_0],\\
        -\alpha+\frac{1}{\lambda}u_2(\tau)|_{\tau = \lambda(t-t^*-1)} + O(e^{-\mu\lambda}),\ t\in [t^*+1-\sigma_0, t^*+1+\sigma_0],\\
        t-T^* + O(e^{-\mu\lambda}),\ t\in[t^*+1+\sigma_0,T^*_{\lambda}],
    \end{cases}
\end{equation}
$$x^*_{\lambda}(t+T^*_{\lambda})=x^*_{\lambda}(t), \  T^*_{\lambda} = T^* + O(e^{-\mu\lambda}),$$
$$u_1(\tau) = \tau + \int\limits_{-\infty}^{\tau}\Big(f_{\alpha}(e^s)-1\Big)ds, \ \ u_2(\tau) = -\alpha\tau + d_0 +  \int\limits_{-\infty}^{\tau}\Big(f_{\alpha}(e^{-\alpha s+d_0})+\alpha\Big)ds,$$
\begin{equation}\label{d_0}
    d_0 = \int\limits_0^1 \dfrac{f_{\alpha}(s)-1}{s}ds + \int\limits_1^{+\infty} \dfrac{f_{\alpha}(s)+\alpha}{s}ds,
\end{equation}
where $\sigma_0$ and $\mu$ are constants satisfying the constraints
\begin{equation}\label{q_sigma_0}
    0<\sigma_0< \min\Big\lbrace\alpha, \frac{1}{2}, \frac{1}{2\alpha}\Big\rbrace, \ \ 0<\mu<\min\Big\lbrace\sigma_0, 1-\sigma_0, 1-\alpha\sigma_0, \alpha-\sigma_0, \alpha-\alpha\sigma_0\Big\rbrace.
\end{equation}

We will seek a function $v$ satisfying equation 
\begin{equation}
\label{eq_v_x*}
\dot{v}=\lambda(f_{\beta}(v(t-h))+g(u^*_{\lambda}(t)))v.
\end{equation}
We will prove that equation  (\ref{eq_v_x*}) has a solution in the form of a fading oscillator, asymptotically close to the solution of equation 
 (\ref{eq_discr_v}) over a finite interval of $t$, and tending to zero as $t\to+\infty$.

For the convenience of subsequent analysis, we will make an exponential substitution $v=e^{\lambda y}$ in equation (\ref{eq_v_x*}) typical for Volterra type equations.
We obtain 
\begin{equation}
    \label{eq_y}
        \dot{y}= f_\beta(e^{\lambda y(t-h)})+g(e^{\lambda x^*_{\lambda}(t)}).
\end{equation}
This equation is the central object of study in the present work.

\section{Main result}

As we take the parameter $\lambda$ in equation (\ref{eq_y}) to $+\infty$, we obtain the limiting relay equation
\begin{equation}
    \label{eq_rele}
        \dot{y}=F(y(t-h))+G(x^*(t)),
\end{equation}
where 
$$F(x) = \lim\limits_{\lambda\to\infty}  f_\beta(e^{\lambda x}) = \begin{cases} 
        1, &  x<0, \\
        - \beta,& x>0,
\end{cases}\quad\quad
G(x) = \lim\limits_{\lambda\to\infty}  g(e^{\lambda x}) = \begin{cases} 
        - \eta,&  x<0, \\
        \xi,&  x>0.
\end{cases}$$

We denote by $l_{k}$ the zeros of the solution of equation (\ref{eq_rele}) in the vicinity of which the function increases, and by $r_{k}$ those in the vicinity of which it decreases. Let us introduce the set of initial functions for equation (\ref{eq_rele}):
\begin{equation}
    \label{set_phi}
    \varphi\in C[-h,0],\ \varphi(0)=-d,\ \varphi(t)<0 \text{ при } t\in[-h,0].
\end{equation}

In the work \cite{pnd_vera}, the following theorem is proven, ensuring the existence of a fading oscillator mode for equation (\ref{eq_y}).

\begin{theorem}
    \label{th_pnd}
        Let us fix a natural number $n$. Let the parameters $\beta, \xi, \eta, h$ satisfy the conditions
     $$ (n-1)T^*+\dfrac{d}{1+\xi} < h < nT^*+\dfrac{d}{1+\xi}, $$
\begin{equation}\label{param}
  \xi > 0, \ \  \eta = 1 + \dfrac{1+\xi}{\alpha},  \ \  \beta > \max\Big\lbrace\dfrac{1+\xi}{n(1+\alpha)}-1, 0\Big\rbrace.
\end{equation}
  Then there exists  $m \in  \mathbb{N} \cup \{0\}$, such that equation  (\ref{eq_rele}) with a continuous initial function from the set  (\ref{set_phi}) under condition
\begin{equation}\label{d}
      0<d<(1+\xi) t^*
  \end{equation}
  has a stable solution
   $$ y^*(t) =
   \begin{cases}
       p^*(t) , & t \in [0, h+l_1],
	\\
	 p^*(t) - (1+\beta)(t - h -l_k) - c_{k-1} , & h+l_k \leq t \leq h +r_k,
	\\
	 p^*(t) -  c_{k} , & h+r_k \leq t \leq h+l_{k+1},
  \\
  p^*(t) - c_{n+m}, & t \geqslant r_{n+m} + h,
   \end{cases}$$
where the function $p^*(t)$ is described by the formula:
\begin{equation}\label{p^*}
    p^*(t) =\begin{cases}
	(1+\xi)t -d , & 0 < t < t^*,
	\\
	 (1-\eta)t + (\xi+\eta)t^* -d, & t^* \leq t \leq T^*,
 \end{cases}  
\end{equation}
$$p^*(t+T^*) = p^*(t),$$
  \begin{equation}\label{c_{k}}
    c_0=0, \ \ c_{k} =  (1+\beta)\sum_{i=1}^{k} (r_i - l_i).
\end{equation} 
 The first $2n$ roots have values:
\begin{equation}
    \label{roots}
    l_{k} =\dfrac{d}{1+\xi} + (k-1)T^*,\ \  r_{k} =-\dfrac{\alpha d}{1+\xi}  + kT^*, \ \ k =  1, \dots, n.
\end{equation} 
\end{theorem}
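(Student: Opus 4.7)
The plan is to exploit that (\ref{eq_rele}) is a relay equation: both $F$ and $G$ are piecewise constant, so on any interval where the signs of $y(t-h)$ and $x^*(t)$ do not change, the equation reduces to $\dot y = \mathrm{const}$ and $y$ is piecewise linear. I would therefore build the solution by the method of steps, advancing one sign-interval at a time and checking the sign of the delayed argument at each matching point. On the first block $[0, h]$, the initial function from (\ref{set_phi}) is strictly negative, so $F(y(t-h)) \equiv 1$; reading off from $x^*$ that $G(x^*(t)) = \xi$ on $(0, t^*)$ and $G(x^*(t)) = -\eta$ on $(t^*, T^*)$ extended periodically, direct integration of $\dot y = 1 + G(x^*(t))$ from $y(0) = -d$ reproduces exactly the function $p^*(t)$ of (\ref{p^*}).

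Next I would locate the zeros of $p^*$. The ascending slope is $1 + \xi$ and, using the identity $\eta - 1 = (1+\xi)/\alpha$ from (\ref{param}), the descending slope equals $-(1+\xi)/\alpha$; hence on one period $p^*$ vanishes at $l_1 = d/(1+\xi)$ and $r_1 = T^* - \alpha d/(1+\xi)$, and the periodicity of $p^*$ propagates this to (\ref{roots}). The bound (\ref{d}) makes every peak value $p^*(t^*) = (1+\xi)t^* - d$ strictly positive so the crossings are transverse, while the two-sided bound on $h$ in (\ref{param}) is precisely the inclusion $l_n < h < l_{n+1}$, ensuring that exactly $n$ positive excursions of $p^*$ fit inside $[0, h]$.

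Then I would push the solution beyond $t = h$ by induction over the alternating pieces $[h + r_{k-1}, h + l_k]$ and $[h + l_k, h + r_k]$, with $r_0 := 0$. On the first type of piece the delayed value $y(t-h)$ coincides with $p^* - c_{k-1}$ on $[r_{k-1}, l_k]$ and is non-positive there, so $F = 1$ and $y - p^*$ keeps its current constant value $-c_{k-1}$. On the second type $y(t-h)$ is positive, $F$ flips to $-\beta$, the right-hand side drops by $1 + \beta$, and matching at $t = h + l_k$ produces the middle branch $y = p^* - (1+\beta)(t - h - l_k) - c_{k-1}$; evaluating at $t = h + r_k$ fixes $c_k - c_{k-1} = (1+\beta)(r_k - l_k)$, recovering (\ref{c_{k}}). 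For $k = 1, \dots, n$ the shift $-c_{k-1}$ is still small enough that the signs on the delayed intervals agree with those of $p^*$, so the zeros stay as in (\ref{roots}).

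The main technical work is the sign bookkeeping, and the crucial check is that the induction terminates at $k = n$, i.e., no new positive excursion appears on the $(n+1)$-st period. A direct calculation using $c_n = n(1+\beta)(T^* - (\alpha+1)d/(1+\xi))$ and $p^*(t^*) = (1+\xi)t^* - d$ shows that the inequality $c_n > p^*(t^*)$ is equivalent, after the $d$-dependence cancels, to exactly $\beta > (1+\xi)/(n(\alpha+1)) - 1$, which is the bound in (\ref{param}). Hence $m = 0$ suffices: the delayed argument $p^* - c_n$ is strictly negative on every subsequent period, so $F(y(t-h)) \equiv 1$ for $t \geq r_n + h$ and $y = p^* - c_n$, which is the stated last branch. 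Stability of $y^*$ then follows because every zero is a transverse crossing, so that a small $C[-h,0]$ perturbation of $\varphi$ within (\ref{set_phi}) shifts each $l_k, r_k$ continuously without altering the sign pattern, producing a nearby solution with the same three-branch structure and the same limiting constant.
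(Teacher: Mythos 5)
Your overall strategy (method of steps for the relay equation, integrating $\dot y = 1+G(x^*(t))$ to obtain $p^*$, locating its zeros, then inducting over the positivity intervals of the delayed argument) is the natural one, and several of your computations are correct: the slopes of $p^*$, the zeros $l_1=d/(1+\xi)$, $r_1=T^*-\alpha d/(1+\xi)$, and the equivalence of $c_n>(1+\xi)t^*-d$ with $\beta>(1+\xi)/(n(1+\alpha))-1$ once one takes $c_n=n(1+\beta)\bigl(T^*-(\alpha+1)d/(1+\xi)\bigr)$. (Note the paper itself does not prove Theorem \ref{th_pnd}; it imports it from \cite{pnd_vera}, and its Section 4.1 only records the resulting structure of $q^*=y^*-p^*$.) However, your sign bookkeeping has a genuine gap exactly at the step you yourself call crucial. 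The identity $y^*\equiv p^*$ holds only on $[0,h+l_1]$, so the zeros $l_1,\dots,l_n$ and $r_1,\dots,r_{n-1}$ are automatically those of $p^*$, but the last decreasing zero $r_n=nT^*-\alpha d/(1+\xi)$ lies in $[0,h+l_1]$ only if $h\ge nT^*-(1+\alpha)d/(1+\xi)$, and this does not follow from the assumed lower bound $h>(n-1)T^*+d/(1+\xi)$ (take $d$ small and $h$ near that bound: e.g. $\alpha=\xi=1$, $n=1$, $d=h=0.2$, $\beta=0.5$ gives $h+l_1=0.3$ while $r_1=3.9$). In that regime the solution departs from $p^*$ at $h+l_1<r_n$, decreases with the additional slope $-(1+\beta)$, and its $n$-th decreasing zero occurs strictly before the value in (\ref{roots}); consequently your expression for $c_n$, the inequality $c_n>p^*(t^*)$, and everything downstream of it no longer apply. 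You must either derive the inclusion $r_n\le h+l_1$ from the hypotheses (it cannot be derived) or identify the extra restriction under which (\ref{roots}) holds and treat the remaining parameter range separately.

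Your second problematic step is the claim that $m=0$ always suffices; this is precisely why the theorem keeps the transient index $m$. For $t\ge h+r_n$ the delayed argument is $y^*(t-h)$ with $t-h\in[r_n,\cdot\,]$, and on $[r_n,h+l_1]$ this is still $p^*$, not $p^*-c_n$: the shift $-c_n$ is only in force for delayed times $\ge h+r_n$. If $h\ge nT^*$ (allowed, since $h<nT^*+d/(1+\xi)$), then $l_{n+1}\le h+l_1$, so the delayed trajectory becomes positive again on $(l_{n+1},h+l_1]$ before any correction has accumulated; the solution then exhibits at least one further truncated positive excursion, i.e. extra zeros $l_{n+1},r_{n+1}$ and $m\ge 1$. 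The correct use of $c_n>(1+\xi)t^*-d$ is the one indicated in Section 4.1 of the paper: it guarantees that after finitely many such truncated excursions one has $y^*(t)=p^*(t)-c_{n+m}<0$ for all $t>h+r_{n+m}$; it does not force the absence of excursions after $r_n$. Two smaller points: on $[h+r_{k-1},h+l_k]$ with $k\le n$ the delayed value is $p^*$, not $p^*-c_{k-1}$ (harmless for the sign, but symptomatic of the same bookkeeping issue), and the stability assertion is easier than your transversality argument suggests, since every initial function in (\ref{set_phi}) yields $F(\varphi(t-h))\equiv 1$ on $[0,h]$ and hence the very same solution for $t\ge 0$.
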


The constraint (\ref{d}) ensures the existence of roots for the function $p^*(t)$. The choice of parameter $\eta$ guarantees the periodicity of the function $p^*(t)$. With the chosen constraint on the parameter $\beta$ the function $p^*(t)-c_{n+m}$ takes only negative values.

We will prove the statement regarding the relationship between the solutions of equations (\ref{eq_y}) and (\ref{eq_rele}). To this end, we introduce the notation:
$$\Phi(u)=(\eta +\xi)f_{\alpha}(u)+(1+\alpha)g(u).$$
Note that
$$\Phi(0)=\lim\limits_{u\to+\infty}\Phi(u)=-(\alpha+1).$$
We also introduce the notation for the constant:
\begin{equation}\label{d_*}
    d_* = 
\frac{1}{\alpha}\int\limits_0^{+\infty} 
\dfrac{
\Phi(u)+\alpha+1
}{u}du.
\end{equation}

\begin{lemma}\label{sign_d_*}
The parameter $d_*$ can take positive, negative, and zero values.
\end{lemma}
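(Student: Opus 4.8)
The parameter $d_*$ defined in~(\ref{d_*}) depends on the concrete choice of the smooth functions $f_\alpha$ and $g$ subject to~(\ref{cond_f}),~(\ref{cond_g}) and on the values of $\alpha,\xi>0$ (the value $\eta=1+(1+\xi)/\alpha$ being forced, since this is exactly the condition making $\Phi(0)=\lim_{u\to+\infty}\Phi(u)=-(\alpha+1)$, i.e.\ making the integrand in~(\ref{d_*}) integrable at the endpoints). The plan is therefore simply to exhibit admissible data for which $d_*$ is respectively positive, negative and zero. The cleanest route is to evaluate $d_*$ in closed form on the model functions~(\ref{example_fg}).

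Substituting $f_\alpha(u)=\alpha(1-u)/(\alpha+u)$ and $g(u)=\xi(u-\eta)/(u+\xi)$ into the definition of $\Phi$, and using the identities $\eta+\xi=(1+\alpha)(1+\xi)/\alpha$ and $\eta-1=(1+\xi)/\alpha$, a direct reduction of $\Phi(u)+\alpha+1$ over the common denominator $\alpha(\alpha+u)(u+\xi)$ makes the numerator collapse to the single monomial $u(1+\alpha)^2(1+\xi)(\alpha-\xi)/\alpha$, so that
$$\frac{\Phi(u)+\alpha+1}{u}=\frac{(1+\xi)(1+\alpha)^2(\alpha-\xi)}{\alpha}\cdot\frac{1}{(u+\alpha)(u+\xi)}.$$
Since $\int_0^{+\infty}\frac{du}{(u+\alpha)(u+\xi)}=\frac{1}{\alpha-\xi}\ln\frac{\alpha}{\xi}$ (the case $\alpha=\xi$ being trivial, the integrand then vanishing identically), we obtain
$$d_*=\frac{1}{\alpha}\int_0^{+\infty}\frac{\Phi(u)+\alpha+1}{u}\,du=\frac{(1+\xi)(1+\alpha)^2}{\alpha^2}\,\ln\frac{\alpha}{\xi}.$$
Hence $\operatorname{sign} d_*=\operatorname{sign}(\alpha-\xi)$, and as $\alpha$ and $\xi$ are independent free positive parameters, each of the three cases $d_*>0$, $d_*=0$, $d_*<0$ is realized (take, e.g., $\xi<\alpha$, $\xi=\alpha$, $\xi>\alpha$).

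There is no genuine obstacle here; the only points needing minimal care are the algebraic simplification of $\Phi(u)+\alpha+1$ and the justification that the integral in~(\ref{d_*}) converges. The latter follows because $\Phi(0)=\Phi(+\infty)=-(\alpha+1)$ (so the integrand is bounded near $u=0$ by smoothness of $\Phi$) and because $f_\alpha',g'=O(u^{-2})$ gives $\Phi'(u)=O(u^{-2})$, whence $\Phi(u)+\alpha+1=\Phi(u)-\Phi(+\infty)=O(u^{-1})$ and the integrand is $O(u^{-2})$ near $+\infty$. If a computation-free argument is preferred, one may keep $f_\alpha$ fixed and let $g$ range over admissible functions: then $\Phi$ ranges over all smooth functions with $\Phi(0)=\Phi(+\infty)=-(\alpha+1)$ and $\Phi',u\Phi''=O(u^{-2})$, and choosing $\Phi=-(\alpha+1)+M\psi$ with $\psi$ a fixed smooth bump supported in $(0,+\infty)$ normalized by $\int_0^{+\infty}\psi(u)u^{-1}\,du=1$ yields $d_*=M/\alpha$, which covers all of $\mathbb{R}$ as $M$ does.
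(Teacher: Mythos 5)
Your proposal is correct and follows essentially the same route as the paper: evaluate $d_*$ on the model nonlinearities~(\ref{example_fg}), obtain the closed form $d_*=\frac{(1+\alpha)^2(1+\xi)}{\alpha^2}\ln\bigl(\frac{\alpha}{\xi}\bigr)$, and read off the three signs from the comparison of $\alpha$ and $\xi$. You merely supply more detail than the paper (the explicit collapse of the numerator, the convergence check, and an alternative bump-function argument), which is fine; only note the small wording slip where you state the common denominator as $\alpha(\alpha+u)(u+\xi)$ while also keeping the factor $1/\alpha$ in the numerator, though your displayed formula and final value are consistent and correct.
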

\begin{proof}
Consider the function  $\Phi(u)$ with $f_\alpha(u)$ and $g(u)$, proposed in (\ref{example_fg}), and calculate the value of $d_*$ using formula (\ref{d_*}). We obtain that, in this case, $$d_* = \dfrac{(1+\alpha)^2(1+\xi)}{\alpha^2}\ln{\Big(\dfrac{\alpha}{\xi}\Big)}.$$

Then, $$d_* \begin{cases}
=0, \text{ when } \alpha=\xi,\\
>0, \text{ when } \alpha>\xi,\\
<0, \text{ when } \alpha<\xi.\\
\end{cases}$$

\end{proof}

We fix the parameter $\tilde{d}>0$, which will play the role of the radius of the neighborhood of zero. The following theorem is the main result of the present work.
\begin{theorem}
    \label{th_exis}
Let the parameters $\beta, \xi, \eta, h$ satisfy the constraints of Theorem \ref{th_pnd}. Then, for each $t_0>h$, there exists a sufficiently large $\lambda_0$, such that, for all $\lambda>\lambda_0$, equation (\ref{eq_y}) has a solution $y^*_\lambda(t)$,  satisfying the limit equations
$$\lim\limits_{\lambda\to+\infty}\max_{t\in[0,t_0]}|y_\lambda^*(t)-y^*(t)|=0; 
\quad
\quad
\lim\limits_{t\to+\infty}y_\lambda^*(t)=-\infty \text{ at a fixed }\lambda \text{ and at } d_*<-\tilde{d}.$$
\end{theorem}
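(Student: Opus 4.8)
The plan is to construct the solution $y^*_\lambda(t)$ of equation (\ref{eq_y}) on the finite interval $[0,t_0]$ by a step method, starting from an initial function close to the one generating $y^*(t)$, and then to show that it continues to exist and tends to $-\infty$ for $t>t_0$. First I would recall that the piecewise structure of $y^*(t)$ is dictated by the zeros $l_k$, $r_k$ of the argument $y(t-h)$ and by the switching points of $x^*(t)$. On each such interval the relay equation (\ref{eq_rele}) has $\dot y$ equal to a constant, whereas for (\ref{eq_y}) the right-hand side $f_\beta(e^{\lambda y(t-h)})+g(e^{\lambda x^*_\lambda(t)})$ differs from that constant only in boundary layers of width $O(1/\lambda)$ around the switching points (because $f_\beta(e^{\lambda y})\to F(y)$ and $g(e^{\lambda x})\to G(x)$ uniformly away from zero, with exponentially small error, and $x^*_\lambda$ is exponentially close to $x^*$ by (\ref{x*_lambda})). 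The standard estimate (as in the cited work \cite{hutch}) is that integrating through such a layer produces only an $O(1/\lambda)$ correction to $y$, so by induction on the $O(1)$-many intervals in $[0,t_0]$ one obtains $\max_{t\in[0,t_0]}|y^*_\lambda(t)-y^*(t)|=O(1/\lambda)\to 0$. One must also check that the zeros of $y^*_\lambda(t-h)$ are $O(1/\lambda)$-close to those of $y^*(t-h)$ and that the function stays bounded away from $0$ between them, so the induction does not break; this uses the strict monotonicity of $y^*$ near its zeros (slopes $1+\xi>0$ on the way up and $1-\eta<0$ or $-(1+\beta)<0$ on the way down).

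The second, genuinely new, part is the behavior as $t\to+\infty$. By Theorem \ref{th_pnd} the relay solution after the transient equals $p^*(t)-c_{n+m}$, a periodic function that is strictly negative; I would argue that, once $y^*_\lambda$ has been driven $O(1/\lambda)$-close to this negative periodic regime at some time $t_0>h$ lying past the transient, the exact equation keeps $y^*_\lambda$ negative, hence $f_\beta(e^{\lambda y^*_\lambda(t-h)})$ is close to $1$ (small $\lambda$-errors), so $v=e^{\lambda y}$ behaves like the first-equation oscillator but shifted down. The key quantitative object is the drift of $y^*_\lambda$ over one period $T^*_\lambda$: for the relay equation the net increment of $p^*$ over a period is $(1+\xi)t^* + (1-\eta)(T^*-t^*)$, which vanishes exactly because of the choice $\eta=1+(1+\xi)/\alpha$ (this is precisely the periodicity condition noted after Theorem \ref{th_pnd}). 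For the smooth equation this increment is not exactly zero; I would compute it to the next order and show it equals $-d_*/\lambda + o(1/\lambda)$ (or more precisely that the per-period decrement of $\min_t y^*_\lambda$ is $\sim d_*/\lambda$, so negative when $d_*<0$). Here the constant $d_*$ from (\ref{d_*}) enters: the corrections to the integral $\int f_\beta(e^{\lambda y})+g(e^{\lambda x^*_\lambda})\,dt$ over a period, coming from the boundary layers where $y$ or $x^*_\lambda$ cross zero, assemble via the combination $\Phi(u)=(\eta+\xi)f_\alpha(u)+(1+\alpha)g(u)$ into exactly $\frac{1}{\alpha}\int_0^\infty\frac{\Phi(u)+\alpha+1}{u}\,du = d_*$, by the same change of variables $u=e^{\lambda x^*_\lambda(t)}$ that appears in (\ref{d_0}).

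Consequently, if $d_*<-\tilde d<0$, then over each period $y^*_\lambda$ decreases by approximately $|d_*|/\lambda>0$, so after $N$ periods it has dropped by $\approx N|d_*|/\lambda\to-\infty$; one makes this rigorous by a comparison/monotonicity argument in the Poincaré-type map $y\mapsto$ (value after one period), showing it is a contraction-like shift pushing the whole profile down by a fixed positive amount each period, uniformly as long as $y$ stays negative — which it does, since it only goes more negative. I would package the finite-interval construction and the infinite-time drift into a single continuation argument: the step method gives a solution on $[0,t_0]$; since at $t_0$ the solution lies in the basin of the negative periodic regime, extend it and apply the per-period decrement estimate to conclude $y^*_\lambda(t)\to-\infty$.

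The main obstacle is the second-order asymptotic bookkeeping in the periodicity argument: showing that the $O(1/\lambda)$ corrections from all the boundary layers over one period sum precisely to $-d_*/\lambda$ with the stated error, and in particular controlling how the perturbed zeros $l_k,r_k$ (which themselves shift by $O(1/\lambda)$) feed back into the integral. This requires carefully expanding $f_\beta(e^{\lambda y})$ and $g(e^{\lambda x^*_\lambda})$ in the transition layers (using the internal-layer variables $u_1(\tau),u_2(\tau)$ from (\ref{x*_lambda}) and their analogues for $y$), and verifying the cancellations that produce exactly the combination $\Phi$. The finite-interval closeness is comparatively routine given \cite{hutch}, and the $t\to\infty$ conclusion is then a short monotonicity argument once the per-period decrement is pinned down.
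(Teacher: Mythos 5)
Your proposal is correct and follows essentially the same route as the paper: a boundary-layer step method giving $O(1/\lambda)$ closeness to $y^*(t)$ on the finite interval (and through the transient with its $2m$ extra zeros), followed by the key computation that in the eventually-negative regime the term $f_\beta(e^{\lambda y(t-h)})-1$ is exponentially small and the per-period increment $\int_0^{T^*_\lambda}\bigl(1+g(e^{\lambda x^*_\lambda(s)})\bigr)ds$ equals $d_*/\lambda$ up to exponentially small errors, which forces $y^*_\lambda(t)\to-\infty$ once $d_*<-\tilde d$ and negativity is seen to be self-sustaining. The paper merely organizes this bookkeeping through the decomposition $y^*_\lambda=p^*_\lambda+q^*_\lambda$ (Theorems \ref{th_asymp}, \ref{th_asymp_q}, \ref{th_asymp_q2}), with the drift (\ref{sol_p*lambda_per}) carried entirely by $p^*_\lambda$ while $q^*_\lambda$ stabilizes at a constant; note only the intermediate sign slip where you call the increment $-d_*/\lambda$, though your final conclusion (decrease by $|d_*|/\lambda$ per period when $d_*<0$) matches the paper.
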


This theorem is proved in the following section, and its validity follows directly from Theorems \ref{th_asymp}, \ref{th_asymp_q}, \ref{th_asymp_q2}. The essence of the proof lies in finding the asymptotic formulas for the solution $y^*_\lambda(t)$. For this purpose, we represent the solution as a sum $y^*_\lambda(t)=p^*_\lambda(t)+q^*_\lambda(t)$. It is proven that for any $t_0>0$ on the time interval $[0,t_0]$, the function $p^*_\lambda(t)$
is asymptotically close to the periodic function $p^*(t)$ for sufficiently large $\lambda$; at a fixed $\lambda$ $p^*_\lambda(t)\to-\infty$,  as $t\to+\infty$. Meanwhile, the function $q^*_\lambda(t)$ is asymptotically close to $q^*(t)$, that is, it is small on the interval $[0,h+l_1]$, then on the interval  $[h+l_1, h+r_{n+m}]$ it is close to a decreasing function, and for $t>h+r_{n+m}$ it is asymptotically close to a negative constant (see Fig. \ref{pic:yqp}).

Note that considering the exponential substitution, for the function  $v_\lambda^*(t)=e^{\lambda y_\lambda^*(t)}$ Theorem~\ref{th_exis} implies the existence of a solution depicted in Figure \ref{pic:v}, where  $v_\lambda^*(t)\to 0$, as $t\to+\infty$.

\section{Proof of the theorem \ref{th_exis}}

\subsection{Decomposition into a sum}
We seek the solution of equation (\ref{eq_y})  in the form of a sum $y^*_\lambda(t)=p^*_\lambda(t)+q^*_\lambda(t)$, where $p=p^*_\lambda(t)$  satisfies the equation
\begin{equation}\label{eq_plambda}
    \dot{p}=1+g(e^{\lambda x^*_{\lambda}(t)}),
\end{equation}
and $q=q^*_\lambda(t)$ satisfies the equation
\begin{equation}\label{eq_qlambda}
    \dot{q}= f_\beta(\exp({\lambda (p^*_\lambda(t-h)+q(t-h))}))-1.
\end{equation}

The solution of equation (\ref{eq_rele}) with an initial function from the set (\ref{set_phi}) can also be represented as a sum  $y^*=p^*+q^*$ of the solutions of the following two problems.
The first problem is:
   $$ \begin{cases}
        \dot{p}=1+G(x^*(t)),\\
        p|_{t=0}=-d,
    \end{cases}$$
where the solution $p=p^*(t)$ is a  $T^*$-periodic function described by the formula (\ref{p^*}) (see Fig. \ref{pic:yqp}). 

The second problem is:
\begin{equation}
\label{task_q}
    \begin{cases}
        \dot{q}=
        F(p^*(t-h)+q(t-h))-1,\\
        q(t)=\varphi(t) \text{ при } t\in[-h,0],
    \end{cases}
\end{equation}
where $p^*(t)=-d$ is defined for  $t\in[-h,0]$; $\varphi\in C[-h, 0]$, $\varphi(0)=0$,  and $\varphi(t)<d$,  for $t\in[-h,0]$. The right-hand side of the equation in (\ref{task_q}) is either zero or a negative number $-\beta-1$. Therefore, the solution $q^*(t)$ of problem  (\ref{task_q}) is a continuous function, composed of segments of constants alternating with segments of affine decreasing functions with a slope of $-\beta-1$ (see Fig. \ref{pic:yqp}). 
 \begin{figure}[h]
 \centering
	\includegraphics[width=14cm]{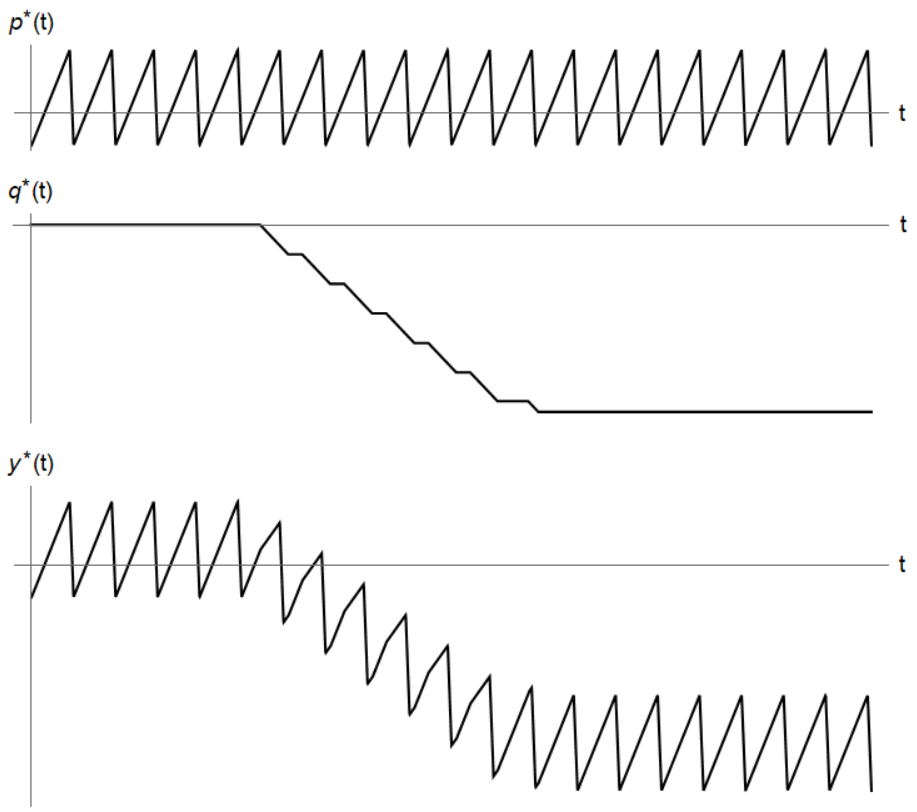}
     \centering \caption{
    The functions $p^*(t)$, $q^*(t)$, $y^*(t)=p^*(t)+q^*(t)$.
    }\label{pic:yqp}
\end{figure}
Moreover, the change in the slope occurs at points of the form $h+l_k$, $h+r_k$, where $l_k$ and $r_k$ are the zeros of the function $y^*(t)=p^*(t)+q^*(t)$.

On the interval  $[0,h+l_1]$ the problem (\ref{task_q}) takes the form $$\dot{q}=0,\quad q|_{t=0}=0,$$ so on this interval $q^*(t)=0$. Consequently, the roots of the equation $p^*(t)+q^*(t)=0$ on this interval are the zeros of the function $p^*(t)$, which are described by formula (\ref{roots}). Next, we inductively find that on the intervals $[h+l_k, h+r_k]$ (which are the intervals of positivity of the already constructed $y^*(t)$ shifted by $h$), $k=1,\ldots,n$, the function $q^*(t)$ is obtained from the problem 
\begin{equation}
    \label{task_q+}
    \dot{q}=-\beta-1, \quad q|_{t=h+l_k}=-c_{k-1},
\end{equation}
where $c_k$ for $k=0,1,\ldots,m+n$ are described by formula (\ref{c_{k}}).
On the intervals $[h+r_k,h+l_{k+1}]$, $k=1,\ldots,n$, the function $q^*(t)$ satisfies
\begin{equation}
    \label{task_q-}
    \dot{q}=0, \quad q|_{t=h+l_k}=-c_k.
\end{equation}
The function $q^*(t)$ does not increase. However, the sum $p^*(t)+q^*(t)$ may have a finite number of roots for $t>r_n$. The number of these roots is denoted by $2m$, where $m\geqslant0$. Consequently, the tasks (\ref{task_q+}) and (\ref{task_q-})  persist on the intervals  $[h+l_k, h+r_k]$, for $k=n+1,\ldots,n+m$, and $[h+r_k,h+l_{k+1}]$, for $k=n+1,\ldots,n+m-1$, respectively.
The inequality $(1+\xi)t^*-d<c_{n}$, which follows from the restriction  (\ref{param}) on the parameter 
 $\beta$,  implies that  $y^*(t)=p^*(t)+q^*(t)<0$ for $t>r_{n}+h$.

Thus,
\begin{equation}
    \label{q*}
    q^*(t)=
    \begin{cases}
        0,& t\in[0,h+l_1],\\
        -(\beta+1)(t-h-l_k)-c_{k-1},&t\in[h+l_k,h+r_k],\ k=1,\ldots,n+m,\\
        -c_k,&t\in[h+r_k,h+l_{k+1}],\ k=1,\ldots,n+m-1,\\
        -c_{m+n},&t\geqslant h+ r_{m+n}.
    \end{cases}
\end{equation}

Next, we will equip equations (\ref{eq_plambda}) and (\ref{eq_qlambda}) with appropriate initial conditions and proceed to separately prove the asymptotic closeness of $p^*_\lambda(t)$ to $p^*(t)$ and $q^*_\lambda(t)$ to $q^*(t)$.

\subsection{The asymptotic behavior of the solution $p^*_\lambda(t)$ to equation  (\ref{eq_plambda})}

Fix the parameter 
\begin{equation}
\label{cond_sigma}
    0<\sigma<\min\left\lbrace\frac{t^*_{\lambda}}{2}, \frac{T^*_{\lambda}-t^*_{\lambda}}{2}\right\rbrace,
\end{equation}
which represents the radius of the neighborhood around the inflection points, i.e., points of the form $kT^*_\lambda,\  kT^*_\lambda+t^*_\lambda$, $k\in \mathbb{N}\cup\{0\}$.

To describe the asymptotic behavior of the function $p^*_\lambda(t)$ in the vicinity of the inflection points, we introduce special functions.
\begin{equation}
    \label{v1}
    v_1(\tau)=(1-\eta)\tau + \int\limits_{-\infty}^{\tau}\Big(g(e^{s})+\eta\Big)ds,
\end{equation}
\begin{equation}
    \label{v2}
    v_2(\tau)= (1+\xi)\tau +\int\limits_{-\infty}^{\tau}\Big(g(e^{-\alpha s+d_0})-\xi\Big)ds,
\end{equation}
Here, the constant  $d_0$ is described by the formula (\ref{d_0}). We define another similar constant.
$$d_1 = \int\limits_{0}^{1}\dfrac{g(u)+\eta}{u}du+ \int\limits_{1}^{+\infty}\dfrac{g(u)-\xi}{u}du. $$
Note that the numbers $d_0$, $d_1$, $d_*$ are related by the equation.
$$d_* = \frac{1}{\alpha}\Big(d_0(\eta +\xi)+ d_1(1+\alpha)  \Big).$$

Let us formulate and prove the statement describing the asymptotic behavior of the introduced functions (\ref{v1}), (\ref{v2}) as $\tau\to-\infty$ and $\tau\to+\infty$.
\begin{lemma}
    \label{lem_v_asymp}
\begin{equation}
    \label{v1_asymp-}
    v_1(\tau)= (1-\eta)\tau +O(e^{\tau}) \text{ as } \tau\to -\infty,
\end{equation}
\begin{equation}
    \label{v1_asymp+}
    v_1(\tau)= (1+\xi)\tau +d_1+O(e^{-\tau}) \text{ as } \tau\to +\infty,
\end{equation}
\begin{equation}
    \label{v2_asymp-}
    v_2(\tau)= (1+\xi)\tau + O(e^{\alpha\tau}) \text{ as } \tau\to -\infty,
\end{equation}
\begin{equation}
    \label{v2_asymp+}
    v_2(\tau)=d_*-d_1+(1-\eta)\tau+O(e^{-\alpha\tau}) \text{ as } \tau\to +\infty,
\end{equation}
\end{lemma}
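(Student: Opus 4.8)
The plan is to obtain all four asymptotics from elementary estimates on the integrands in (\ref{v1}), (\ref{v2}), combined with the substitutions $u=e^{s}$ and $u=e^{-\alpha s+d_0}$, which identify the constants that appear with $d_0$, $d_1$, $d_*$. The only facts needed are those packaged in (\ref{cond_g}): since $g$ is smooth, $g(u)+\eta=g(u)-g(0)=O(u)$ as $u\to0^{+}$; and since $g'(u)=O(u^{-2})$ as $u\to+\infty$, integrating from $u$ to $+\infty$ gives $g(u)-\xi=O(u^{-1})$ as $u\to+\infty$.

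First I would settle the two statements for $\tau\to-\infty$, which are immediate. For (\ref{v1_asymp-}): as $s\to-\infty$ one has $e^{s}\to0$, hence $g(e^{s})+\eta=O(e^{s})$, so $\int_{-\infty}^{\tau}(g(e^{s})+\eta)\,ds$ converges and is $O(e^{\tau})$; subtracting the linear term in (\ref{v1}) gives the claim. For (\ref{v2_asymp-}): as $s\to-\infty$ one has $e^{-\alpha s+d_0}\to+\infty$, hence $g(e^{-\alpha s+d_0})-\xi=O(e^{\alpha s})$, so $\int_{-\infty}^{\tau}(g(e^{-\alpha s+d_0})-\xi)\,ds=O(e^{\alpha\tau})$, which is exactly (\ref{v2_asymp-}).

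For the $\tau\to+\infty$ statements I would split each integral at $0$ and rewrite the integrand so as to expose its nonzero limit at $+\infty$. For $v_1$: on $[0,\tau]$ write $g(e^{s})+\eta=(g(e^{s})-\xi)+(\xi+\eta)$, and using $g(e^{s})-\xi=O(e^{-s})$ to converge the tail,
$$\int_{0}^{\tau}(g(e^{s})+\eta)\,ds=(\xi+\eta)\tau+\int_{0}^{+\infty}(g(e^{s})-\xi)\,ds+O(e^{-\tau}).$$
Adding $(1-\eta)\tau$ and $\int_{-\infty}^{0}(g(e^{s})+\eta)\,ds$ gives $v_1(\tau)=(1+\xi)\tau+C_1+O(e^{-\tau})$; the substitution $u=e^{s}$ turns $C_1=\int_{-\infty}^{0}(g(e^{s})+\eta)\,ds+\int_{0}^{+\infty}(g(e^{s})-\xi)\,ds$ into $\int_{0}^{1}\frac{g(u)+\eta}{u}\,du+\int_{1}^{+\infty}\frac{g(u)-\xi}{u}\,du=d_1$, which is (\ref{v1_asymp+}). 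For $v_2$: on $[0,\tau]$ write $g(e^{-\alpha s+d_0})-\xi=(g(e^{-\alpha s+d_0})+\eta)-(\eta+\xi)$, use $g(e^{-\alpha s+d_0})+\eta=O(e^{-\alpha s})$ for the tail, and combine with $(1+\xi)\tau$ and $\int_{-\infty}^{0}(g(e^{-\alpha s+d_0})-\xi)\,ds$ to obtain $v_2(\tau)=(1-\eta)\tau+C_2+O(e^{-\alpha\tau})$.

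The one genuinely bookkeeping-heavy step — the place where I expect to have to be most careful — is verifying that $C_2=d_*-d_1$. Applying the substitution $u=e^{-\alpha s+d_0}$ (so $ds=-du/(\alpha u)$; $s=0$ maps to $u=e^{d_0}$, $s=-\infty$ to $u=+\infty$, and $s=+\infty$ to $u=0$) gives
$$C_2=\frac{1}{\alpha}\left(\int_{0}^{e^{d_0}}\frac{g(u)+\eta}{u}\,du+\int_{e^{d_0}}^{+\infty}\frac{g(u)-\xi}{u}\,du\right).$$
Writing $\int_{0}^{e^{d_0}}=\int_{0}^{1}+\int_{1}^{e^{d_0}}$ and $\int_{e^{d_0}}^{+\infty}=\int_{1}^{+\infty}-\int_{1}^{e^{d_0}}$, the two contributions over $[1,e^{d_0}]$ collapse to $\int_{1}^{e^{d_0}}\frac{\eta+\xi}{u}\,du=(\eta+\xi)d_0$, while the remaining terms assemble precisely into $d_1$. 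Hence $C_2=\frac{1}{\alpha}\big(d_1+(\eta+\xi)d_0\big)$, and the stated identity $d_*=\frac{1}{\alpha}\big(d_0(\eta+\xi)+d_1(1+\alpha)\big)$ yields $C_2=d_*-d_1$, establishing (\ref{v2_asymp+}) and completing the proof. The whole argument is routine analysis; beyond keeping the constants straight I anticipate no real obstacle.
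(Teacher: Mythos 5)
Your proof is correct and follows essentially the same route as the paper: the $\tau\to-\infty$ cases from $g(u)+\eta=O(u)$ at $0$ and $g(u)-\xi=O(u^{-1})$ at $+\infty$, and the $\tau\to+\infty$ cases by splitting the integral at $0$, estimating the tail, and identifying the constant via the substitutions $u=e^{s}$, $u=e^{-\alpha s+d_0}$. The only difference is that you spell out the bookkeeping for $v_2$ (the identification $C_2=d_*-d_1$ via the relation between $d_0$, $d_1$, $d_*$), which the paper leaves as "proved in a similar manner," and your computation there checks out.
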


\begin{proof}

The equality (\ref{v1_asymp-}) follows from the asymptotic representation  $g(u) = -\eta + O(u)$ as $u\rightarrow0$. In order to prove (\ref{v1_asymp+}) we transform the function $v_1(\tau)$ 

$$v_1(\tau) = (1+\xi)\tau+\int\limits_{-\infty}^{0}\Big(g(e^{s})+\eta\Big)ds + \int\limits_{0}^{+\infty}\Big(g(e^{s})-\xi\Big)ds + \int\limits_{\tau}^{+\infty}\Big(g(e^{s})-\xi\Big)ds.$$

We make the substitution $e^s = u$ in the first two integrals, and for the third one, apply the asymptotic equality $g(u) = \xi + O(u^{-1})$ as $u \rightarrow +\infty$. As a result, we obtain the required equality (\ref{v1_asymp+}). 

Formulas (\ref{v2_asymp-}) and (\ref{v2_asymp+}) are proved in a similar manner.

\end{proof}

Let us prove the following theorem concerning the asymptotic behavior of the solution to equation  (\ref{eq_plambda}).

\begin{theorem}\label{th_asymp}
Let the parameters $\beta, \xi, \eta, h$ satisfy the conditions of Theorem \ref{th_pnd}, let $\alpha>0$, the parameters $\sigma_0, \mu$ satisfy the inequalities (\ref{q_sigma_0}), $\lambda\gg1$, $\sigma$ satisfy the inequalities (\ref{cond_sigma}). 
The equation (\ref{eq_plambda}) with the initial value  $p(-\sigma) = -(1-\eta)\sigma-d$ has a solution with the asymptotic behavior:
\begin{equation}
 \label{sol_p*lambda}
 p^*_\lambda(t)= 
\begin{cases}
    -d + \frac{1}{\lambda} v_1(\tau)|_{\tau= \lambda t} + \Delta_1, & t\in[-\sigma,\sigma],\\
  (1+\xi)(t-l_1) + \frac{d_1}{\lambda} + \Delta_2, & t\in[\sigma,t^*_{\lambda}-\sigma],\\
   -d+(1+\xi)t^*_\lambda  + \frac{d_1}{\lambda}+\frac{1}{\lambda}v_2(\tau)|_{\tau = \lambda(t-t^*_\lambda)} +  \Delta_3, & t\in[t^*_{\lambda}-\sigma,t^*_{\lambda}+\sigma],\\
   (1-\eta)(t-r_1)+ \frac{d_*}{\lambda}+\Delta_4, & t\in[t^*_{\lambda}+\sigma,T^*_{\lambda}-\sigma],\\
    - d + \frac{d_*}{\lambda}+ \frac{1}{\lambda}v_1(\tau)|_{\tau = \lambda(t-T^*_\lambda)} + \Delta_5, & t\in[T^*_{\lambda}-\sigma,T^*_{\lambda}+\sigma],\\
    \end{cases}
\end{equation}
\begin{equation}
 \label{sol_p*lambda_per} 
 p^*_\lambda(t+kT_\lambda^*) = p^*_\lambda(t) + k\Big(\dfrac{d_*}{\lambda} + \tilde \Delta\Big),
 \end{equation}
where $\Delta_i = \Delta_i(t,\lambda) = O(\lambda e^{-\mu\lambda})$, $i=1,\ldots,5$,  $\tilde \Delta = \Delta_5(T^*_\lambda) -  \Delta_1(0).$
The asymptotic expressions are written here under the assumption  $\lambda\to\infty$.
\end{theorem}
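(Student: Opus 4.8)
The plan is to integrate equation \eqref{eq_plambda} directly, using the explicit asymptotic form \eqref{x*_lambda} of $x^*_\lambda(t)$ piece by piece, and to match the pieces at the breakpoints $\pm\sigma$, $t^*_\lambda\pm\sigma$, $T^*_\lambda\pm\sigma$. Since \eqref{eq_plambda} is a scalar equation with no delay in the right-hand side, the solution is simply
$$p^*_\lambda(t) = p^*_\lambda(-\sigma) + \int_{-\sigma}^{t}\bigl(1+g(e^{\lambda x^*_\lambda(s)})\bigr)\,ds,$$
so the whole theorem reduces to estimating five definite integrals, one per interval in \eqref{x*_lambda}. On the ``linear'' intervals of $x^*_\lambda$ (where $x^*_\lambda(s)=s+O(e^{-\mu\lambda})$ or $x^*_\lambda(s)=-\alpha(s-t^*)+d_0/\lambda+O(e^{-\mu\lambda})$), the argument $e^{\lambda x^*_\lambda(s)}$ is either exponentially large or exponentially small, so $g(e^{\lambda x^*_\lambda(s)})$ is within $O(e^{-\mu\lambda})$ (after adjusting $\mu$, using \eqref{cond_g}) of its limiting value $\xi$ or $-\eta$, respectively; integrating gives the affine pieces with the claimed $O(\lambda e^{-\mu\lambda})$ error. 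On the three ``transition'' intervals around $0$, $t^*_\lambda$, $T^*_\lambda$ one makes the change of variables $\tau=\lambda(t-\cdot)$; then $x^*_\lambda$ is, up to $O(e^{-\mu\lambda})$, exactly the inner expansion ($1+u_1(\tau)/\lambda$ near $t=1$ type behavior, but here near $t=0,t^*_\lambda$ the relevant pieces of $x^*$ are $s$ and $-\alpha(s-t^*)+d_0/\lambda$), and the integral $\int (1+g(e^{\lambda x^*_\lambda}))\,ds$ becomes $\frac1\lambda\int(1+g(e^{\tau}))\,d\tau$ or $\frac1\lambda\int(1+g(e^{-\alpha\tau+d_0}))\,d\tau$ — precisely the integrands defining $v_1$ and $v_2$ in \eqref{v1}, \eqref{v2}. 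Here I would use Lemma~\ref{lem_v_asymp} to control the lower limit $-\infty$ versus $-\lambda\sigma$: replacing $-\lambda\sigma$ by $-\infty$ costs $O(e^{-\mu\lambda})$ by the exponential decay in \eqref{v1_asymp-}, \eqref{v2_asymp-}.

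The matching is the bookkeeping heart of the argument. Starting from $p^*_\lambda(-\sigma)=-(1-\eta)\sigma-d$, the first integral over $[-\sigma,t]$, $t\in[-\sigma,\sigma]$, produces $-d+\frac1\lambda v_1(\lambda t)+\Delta_1$ once one verifies $v_1(-\lambda\sigma)/\lambda=(1-\eta)(-\sigma)+O(e^{-\mu\lambda})$, i.e.\ that the chosen initial value is consistent with $v_1$ to within the error. Evaluating at $t=\sigma$ and using \eqref{v1_asymp+}, $v_1(\lambda\sigma)=(1+\xi)\lambda\sigma+d_1+O(e^{-\lambda\sigma})$, gives $p^*_\lambda(\sigma)=-d+(1+\xi)\sigma+d_1/\lambda+O(\lambda e^{-\mu\lambda})$, which is exactly the $t\to\sigma$ limit of the second line (recall $l_1=d/(1+\xi)$, so $(1+\xi)(\sigma-l_1)=(1+\xi)\sigma-d$). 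Continuing, on $[\sigma,t^*_\lambda-\sigma]$ we integrate $1+\xi+O(e^{-\mu\lambda})$; at $t^*_\lambda-\sigma$ the value $-d+(1+\xi)(t^*_\lambda-\sigma)+d_1/\lambda+\cdots$ must match the $\tau\to-\infty$ side of $v_2$, which by \eqref{v2_asymp-} equals $(1+\xi)(-\sigma)+O(e^{-\mu\lambda})$ after dividing by $\lambda$ — consistent. The third interval contributes $\frac1\lambda v_2(\lambda(t-t^*_\lambda))$; at $t^*_\lambda+\sigma$, \eqref{v2_asymp+} gives $v_2(\lambda\sigma)=d_*-d_1+(1-\eta)\lambda\sigma+O(e^{-\lambda\sigma})$, so the accumulated value is $-d+(1+\xi)t^*_\lambda+(1-\eta)\sigma+d_*/\lambda+O(\lambda e^{-\mu\lambda})$, matching the fourth line since $(1-\eta)(t^*_\lambda+\sigma-r_1)=(1-\eta)(t^*_\lambda+\sigma)+\alpha d/(1+\xi)$ and $\eta=1+(1+\xi)/\alpha$ makes $-d+(1+\xi)t^*_\lambda=(1-\eta)t^*_\lambda+\alpha d/(1+\xi)+$\,(the $T^*$-periodicity bookkeeping of $p^*$). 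The fifth interval around $T^*_\lambda$ again yields $\frac1\lambda v_1(\lambda(t-T^*_\lambda))$ but now riding on the extra additive constant $d_*/\lambda$ accumulated over one period, and \eqref{sol_p*lambda_per} follows by reading off $p^*_\lambda(T^*_\lambda)-p^*_\lambda(0)=d_*/\lambda+\tilde\Delta$ directly from the first and last lines.

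The main obstacle I anticipate is bookkeeping rather than analysis: one must carry the $O(e^{-\mu\lambda})$ errors inside the exponentials $e^{\lambda x^*_\lambda(s)}$ through $g$ and verify they remain $O(e^{-\mu\lambda})$ after integration over intervals of length $O(1)$ — this is where the constraints \eqref{q_sigma_0} on $\mu$ relative to $\sigma_0,\alpha$ are used, to guarantee that $\lambda x^*_\lambda(s)\to\pm\infty$ at least linearly with a margin, so that $g(e^{\lambda x^*_\lambda(s)})-(\text{limit})$ is genuinely $O(e^{-\mu\lambda})$ uniformly — and to confirm that the seam between the $[1-\sigma_0,1+\sigma_0]$-type windows of $x^*_\lambda$ (width $\sigma_0/\lambda$ in the $\tau$-scale is $O(1)$... no: width $\sigma_0$ in $t$, hence $O(\lambda)$ in $\tau$) dovetails with the $\pm\sigma$ windows of $p^*_\lambda$ — one needs $\sigma$ small enough (as in \eqref{cond_sigma}) and $\lambda$ large enough that the inner-expansion region of $x^*_\lambda$ comfortably contains $[-\sigma,\sigma]$ in the relevant scaled coordinate while the outer region is still linear. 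A secondary subtlety is making the choice of initial value $p(-\sigma)=-(1-\eta)\sigma-d$ precisely consistent with the first line of \eqref{sol_p*lambda} at $t=-\sigma$, i.e.\ showing $\frac1\lambda v_1(-\lambda\sigma)=-(1-\eta)\sigma+O(e^{-\mu\lambda})$ with the error absorbable into $\Delta_1$, which is immediate from \eqref{v1_asymp-}. Everything else is routine integration of piecewise-affine-plus-exponentially-small integrands.
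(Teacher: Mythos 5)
Your plan follows essentially the same route as the paper's proof: solve the Cauchy problem for \eqref{eq_plambda} by direct integration piecewise over the five subintervals, pass to the scaled variable $\tau$ near the sign changes of $x^*_\lambda$ (at $0$, $t^*_\lambda$, $T^*_\lambda$) so that the integrals become $v_1$ and $v_2$, use Lemma~\ref{lem_v_asymp} for the matching at $\pm\sigma$, $t^*_\lambda\pm\sigma$, $T^*_\lambda\pm\sigma$, and deduce \eqref{sol_p*lambda_per} from the one-period increment $p^*_\lambda(T^*_\lambda)-p^*_\lambda(0)=\frac{d_*}{\lambda}+\tilde\Delta$ combined with the $T^*_\lambda$-periodicity of the integrand. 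The only point the paper spells out that you leave implicit is the estimate $g\big(e^{\lambda t+O(e^{-\mu\lambda})}\big)=g\big(e^{\lambda t}\big)+O(e^{-\mu\lambda})$ on the transition intervals (where the argument of $g$ is neither exponentially large nor small), which follows from the derivative bound in \eqref{cond_g}.
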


\begin{proof}
    First, we will prove that on the interval $[-\sigma, T^*_\lambda+\sigma ]$ the function $p^*_\lambda$ described by the formula (\ref{sol_p*lambda}). We will find the asymptotic formulas on each of the intervals $[-\sigma, \sigma], \ [\sigma, t^*_\lambda-\sigma], \ [t^*_\lambda-\sigma, t^*_\lambda+\sigma], \ [t^*_\lambda+\sigma, T^*_\lambda-\sigma]$ and generalize them for all $t>0$.

    The solution to the Cauchy problem
\begin{equation}\label{p_problem}
\begin{cases}
\dot{p} = 1+ g(e^{\lambda x^*_{\lambda}(t)}),\\
p|_{t=-\sigma} = -d - (1-\eta)\sigma
\end{cases}
\end{equation}
has the following form for $ p^*_\lambda(t) = -d - (1-\eta)\sigma + \int\limits_{-\sigma}^{t}\Big(1+g(e^{\lambda x^*_{\lambda}(s)})\Big)ds$.

1. The first interval considered is $[-\sigma, \sigma]$. Here, from equation (\ref{x*_lambda}),  the function $x^*_{\lambda}(t) = t+O(e^{-\mu\lambda})$. 
We will show that the following equality holds: \begin{equation}\label{gO-g=O}
    g(e^{\lambda t + O(e^{-\mu \lambda})}) = g(e^{\lambda t}) + O(e^{-\mu \lambda}).
\end{equation}
The following asymptotic representation holds: $$g(e^{\lambda t + O(e^{-\mu \lambda})})= g(e^{\lambda t})+g'(e^{\lambda t}) \lambda e^{\lambda t}\cdot O( e^{-\mu \lambda}) + O(\lambda e^{-2\mu \lambda}).$$
   Then, due to the property (\ref{cond_g}) of the derivative  $g'$, the quantity  $g'(e^{\lambda t}) \lambda e^{\lambda t}$ is bounded. Hence, the equality (\ref{gO-g=O}) holds true, from which we obtain
\begin{multline*}
     p^*_\lambda(t) = -d - (1-\eta )\sigma+ \int\limits_{-\sigma}^{t}\Big(1+g(e^{\lambda t})  + \lambda O(e^{-\mu\lambda})\Big)ds = \\
    -d + (1-\eta)t + \int\limits_{-\sigma}^{t}\Big(g(e^{\lambda t})+\eta\Big)ds + O(\lambda e^{-\mu\lambda}).
\end{multline*}

Making the substitution $\tau = \lambda t$  in the integral, we obtain the formula for the function $p^*_\lambda(t)$ when $t\in[-\sigma, \sigma]$

$$p^*_\lambda(t) = -d + \frac{1}{\lambda} v_1(\tau)|_{\tau = \lambda t}+ O(\lambda e^{-\mu\lambda}).$$

2.The next interval under consideration is $[\sigma, t^*_\lambda-\sigma]$, where the solution takes the form 

$$ p^*_\lambda(t) = -d + \frac{1}{\lambda} v_1(\tau)|_{\tau = \lambda \sigma} + O(\lambda e^{-\mu\lambda})+\int\limits_{\sigma}^{t}\Big(1+g(e^{\lambda x^*_{\lambda}(s)})\Big)ds.$$ 

The function $x^*_\lambda(t)> \min\{\sigma_0 + O(\frac{1}{\lambda}), \alpha\sigma_0 + O(\frac{1}{\lambda})\} >0 $ on this interval, hence $e^{\lambda x^*_\lambda(t)} \rightarrow +\infty$. Then, utilizing the asymptotic representation  (\ref{v1_asymp+}), we obtain

$$ p^*_\lambda(t) = - d + \frac{1}{\lambda}(1+\xi)\lambda\sigma + \frac{1}{\lambda}(d_1+O(e^{-\lambda\sigma})) + O(\lambda e^{-\mu\lambda})+\int\limits_{\sigma}^{t}\Big(1+ \xi + O(e^{-\lambda \mu}))\Big)ds,$$ 
from which
$$ p^*_\lambda(t) = -d +(1+\xi)t + \frac{d_1}{\lambda} + O(\lambda e^{-\mu\lambda}) \text{ as } t\in[\sigma, t^*_\lambda-\sigma].$$ 

3. Next, we consider the interval   $[t^*_\lambda-\sigma,t^*_\lambda+\sigma ]$. 
Here, the function $x^*_{\lambda}(t)$ changes sign at the point $t^*_{\lambda} = t^* + O(e^{-\lambda \mu})$ and is given by $x^*_{\lambda}(t) = -\alpha(t-t^*) + \frac{d_0}{\lambda} + O(e^{-\lambda \mu}) = -\alpha(t-t^*_\lambda) + \frac{d_0}{\lambda} + O(e^{-\lambda \mu})$. Then, the solution takes the form

$$p^*_\lambda(t) = -d+  (1+\xi)(t^*_\lambda-\sigma) + \frac{d_1}{\lambda} + O(\lambda e^{-\mu\lambda}) + \int\limits_{t^*_\lambda-\sigma}^{t}\Big(1+g(e^{\lambda (-\alpha(s-t^*_\lambda) + \frac{d_0}{\lambda} + O(e^{-\lambda \mu}))})\Big)ds.$$
Denote the integral from the resulting sum as $I_1$. We apply to it an equality analogous to (\ref{gO-g=O}), and make the substitution $\tau = \lambda(t-t^*_{\lambda})$, then

\begin{multline*}
    I_1 =  \frac{1}{\lambda}\int\limits_{-\lambda\sigma}^{\tau}\Big(1+g(e^{-\alpha s+d_0 })\Big)ds + O(\lambda e^{-\lambda \mu}) = \\ 
    \frac{1}{\lambda}(1+\xi)(\tau + \lambda\sigma)+ \frac{1}{\lambda}\int\limits_{-\infty}^{\tau}\Big(g(e^{-\alpha s+d_0})-\xi\Big)ds - \frac{1}{\lambda}\int\limits_{-\infty}^{-\lambda\sigma}\Big(g(e^{-\alpha s+d_0})-\xi\Big)ds  + O(\lambda e^{-\mu\lambda}).
\end{multline*}
We apply the asymptotic equality  $g(u) = \xi + O(u^{-1})$, as $u \rightarrow +\infty$,  and obtain the formula 

$$p^*_\lambda(t) =   -d+ (1+\xi)t^*_\lambda + \frac{d_1}{\lambda} +\frac{1}{\lambda}v_2(\tau)|_{\tau = \lambda(t-t^*_\lambda)} +  O(\lambda e^{-\mu\lambda}),   \text{ as } t\in[t^*_\lambda-\sigma,t^*_\lambda+\sigma ].$$

4. We examine $p^*_\lambda$ on the interval  $[t^*_\lambda+\sigma,T^*_\lambda-\sigma]$. Here, the function $x^*_{\lambda}(t)<\max\{-\alpha\sigma_0 + O(\frac{1}{\lambda}), -\sigma_0 + O(\frac{1}{\lambda})\} < 0.$ Hence, $g(e^{x^*_{\lambda}(t)}) = -\eta + O(e^{-\lambda \mu})$. Then 

\begin{multline*}
     p^*_\lambda(t) =
     - d +(1+\xi)t^*_\lambda + \frac{d_1}{\lambda}+\frac{1}{\lambda}v_2(\lambda\sigma) +  \int\limits_{t^*_\lambda+\sigma}^{t}\Big(1-\eta + O(e^{-\lambda \mu})\Big)ds + O(\lambda e^{-\mu\lambda})=
    \\-d + (\xi+\eta)t^*_\lambda   +(1-\eta)t+ \frac{d_*}{\lambda}+O(\lambda e^{-\mu\lambda}).
\end{multline*}

5. The next interval under consideration is $[T^*_\lambda-\sigma,T^*_\lambda+\sigma]$. Here, $x^*_{\lambda}(t)= t-T^*_{\lambda}$ and changes sign at the point $T^*_{\lambda} = T^* + O(e^{-\mu\lambda})$. Then,

$$p^*_\lambda(t) = -d +(\xi+\eta)t^*_\lambda+(1-\eta)(T^*_\lambda-\sigma) + \frac{d_*}{\lambda}+ \int\limits_{T^*_\lambda-\sigma}^{t}\Big(1+g(e^{\lambda (s-T^*_{\lambda}})\Big)ds+ O(\lambda e^{-\mu\lambda}).$$
Denote the integral from this expression as $I_2$. We make the substitution  $\tau = \lambda(t-T^*_\lambda)$
$$I_2 =\frac{1}{\lambda} \int\limits_{-\lambda\sigma}^{\tau}\Big(1+g(e^{s })\Big)ds  = \frac{1}{\lambda} \int\limits_{-\infty}^{\tau}\Big(g(e^{s})+\eta\Big)ds + \frac{1}{\lambda}(1-\eta)(\tau + \lambda\sigma) +O(\lambda e^{-\lambda \mu}).$$

Then,

$$p^*_\lambda(t) = - d +(1-\eta)T^*_\lambda + (\xi+\eta)t^*_\lambda + \frac{d_*}{\lambda}+ \frac{1}{\lambda}v_1(\tau)|_{\tau = \lambda(t-T^*_\lambda)} + O(\lambda e^{-\mu\lambda}).$$ 
Substituting $T^*_\lambda = T^* + O(e^{-\mu\lambda}), \ t^*_\lambda = t^* + O(e^{-\mu\lambda})$, and utilizing relations  (\ref{param}) for the parameter $\eta$ we obtain

$$ p^*_\lambda(t)=- d + \frac{d_*}{\lambda}+ \frac{1}{\lambda}v_1(\tau)|_{\tau = \lambda(t-T^*_\lambda)} + O(\lambda e^{-\mu\lambda}) \ \text{as} \ t\in[T^*_\lambda-\sigma,T^*_\lambda+\sigma]. $$

Using formulas (\ref{roots}) for the roots of the relay function $p^*(t)$ from Theorem \ref{th_pnd}, we obtain formula (\ref{sol_p*lambda}) for the solution of the Cauchy problem (\ref{p_problem}), when $t\in[-\sigma, T^*_\lambda+\sigma]$.

In the formulation of the problem, it was noted that the function $x^*_\lambda(t)$ is periodic with period $T^*_\lambda$, hence the function  $1+g(e^{x^*_\lambda(t)})$ is also $T^*_\lambda$-periodic. Consider the integral over one period of this function

$$\int\limits_{0}^{T^*_{\lambda}}\Big(1+g(e^{x^*_\lambda(s)})\Big)ds = p^*_{\lambda}(T^*_{\lambda}) - p^*_{\lambda}(0)  = \frac{d_*}{\lambda} +  \Delta_5(T^*_\lambda) -  \Delta_1(0),$$
then 
$$ p^*_{\lambda}(t+kT_\lambda^*) = p^*_{\lambda}(t) + k\Big(\dfrac{d_*}{\lambda} + \tilde \Delta\Big).$$

\end{proof}

\subsection{The asymptotic behavior of the solution $q^*_\lambda(t)$ of equation (\ref{eq_qlambda})}\label{subsec_q}

First, let us describe the asymptotics of the solution $q^*_\lambda(t)$ of equation (\ref{eq_qlambda}) on the interval $[- \sigma, r_n + h + \sigma]$. Here, the numbers $l_k + h$, $r_k + h$, $k = 1, \ldots, n$, which are the first $2n$ break points of the function $q^*(t - h)$, are contained. These are the zeros of $y^*(t)$ shifted by $h$, which coincide with the zeros of the function $p^*(t)$ and are described by the formulas (\ref{roots}).

We will find the asymptotic formulas for the zeros of the solution $p^*_\lambda(t)$ of equation (\ref{eq_plambda}). Let us denote them by $l_{\lambda, k}$, $r_{\lambda, k}$.
Assume the zeros are of the form $l_{\lambda, k} = l_1 + kT_\lambda^* + \Delta$, $r_{\lambda, k} = r_1 + kT_\lambda^* + \Delta$, where $\Delta$ is a small remainder to be determined. Then, considering (\ref{sol_p*lambda}), (\ref{sol_p*lambda_per}),
$$
    p^*_{\lambda}(l_1+kT_\lambda^*+\Delta) = 
  (1+\xi)\Delta + \frac{d_1}{\lambda} + \Delta_2(l_1+\Delta)+ k\Big(\dfrac{d_*}{\lambda} + \Delta_5(T^*_\lambda) - \Delta_1(0)\Big)=0.
$$
From here we obtain for finite $k$
$$\Delta= - \frac{d_1+kd_*}{(1+\xi)\lambda} +O(\lambda e^{\mu\lambda}),$$
from which the asymptotic representation follows:
$$ l_{\lambda,k}=l_k- \frac{d_1+kd_*}{(1+\xi)\lambda} +O(\lambda e^{\mu\lambda}).$$
Similarly, we obtain
$$r_{\lambda,k}=r_k- \frac{d_1+kd_*}{(1+\xi)\lambda} +O(\lambda e^{\mu\lambda}).$$
For finite $k$, by choosing $\lambda$ appropriately, we can achieve $l_{\lambda,k}\in(l_k-\sigma,l_k+\sigma)$ and $r_{\lambda,k}\in(r_k-\sigma,r_k+\sigma)$. We will assume that this inclusion holds for all $k=1,\ldots,n$.

Let us specify the constraints on the parameter $\sigma$. Suppose that in addition to (\ref{cond_sigma}) the inequality
\begin{equation}
\label{cond_sigma2}
    0<\sigma<\frac13\min\left\lbrace\frac{d}{1+\xi}, t^*-\frac{d}{1+\xi},\frac{(1+\xi)t^*-d}{\eta-1},T^*-t^*-\frac{(1+\xi)t^*-d}{\eta-1}\right\rbrace.
\end{equation}
This condition ensures that the $\sigma$-neighborhoods of the points $l_k$, $t^*+(k-1)T^*$, $r_k$, $kT^*$ do not intersect for $k=1, \ldots, n$.

To describe the asymptotics of the function $q_\lambda^*(t)$ on the intervals $[h+l_{k}-\sigma, h+l_{k}+\sigma]$, $[h+r_{k}-\sigma, h+r_{k}+\sigma]$, $k=1, \ldots, n$, we introduce the special functions 
\begin{equation}
    \label{w1}
    w_1(\tau)=\int\limits_{-\infty}^{\tau}( f_\beta(e^{(\xi+1)\theta+d_1})-1)d\theta,
\end{equation}
    \begin{equation}
    \label{w2}
    w_2(\tau)=-(\beta+1)\tau+\int\limits_{-\infty}^{\tau}( f_\beta(e^{(1-\eta)\theta+d_*})+\beta)d\theta.
\end{equation}

Let us formulate and prove the statement about the asymptotic behavior of the introduced functions (\ref{w1}), (\ref{w2}) as $\tau\to-\infty$ and $\tau\to+\infty$.
\begin{lemma}
    \label{lem_w_asymp}
\begin{equation}
    \label{w1_asymp-}
    w_1(\tau)=O(e^{(1+\xi)\tau}) \text{ as } \tau\to -\infty,
\end{equation}
\begin{equation}
    \label{w1_asymp+}
    w_1(\tau)=-(\beta+1)\tau+d_3+O(e^{-(\xi+1)\tau}) \text{ as } \tau\to +\infty,
\end{equation}
where $$d_3=\frac{d_2-d_1(\beta+1)}{\xi+1},
\quad 
d_2=\int\limits_{0}^{1}\frac{ f_\beta(u)-1}{u}du+\int\limits_{1}^{+\infty}\frac{ f_\beta(u)+\beta}{u}du,$$
\begin{equation}
    \label{w2_asymp-}
    w_2(\tau)=-(\beta+1)\tau+O(e^{(\eta-1)\tau})\text{ as } \tau\to -\infty,
\end{equation}
\begin{equation}
    \label{w2_asymp+}
    w_2(\tau)=d_4+O(e^{(1-\eta)\tau}) \text{ as } \tau\to +\infty,
\end{equation}
where
$$d_4=\frac{d_2-d_*(\beta+1)}{\eta-1}.$$
\end{lemma}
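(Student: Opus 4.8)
The plan is to mirror the argument used for Lemma~\ref{lem_v_asymp}, the only analytic input being the behaviour of $f_\beta$ at the ends of its range. From $f_\beta(0)=1$ and smoothness we have $f_\beta(u)=1+O(u)$ as $u\to0^+$; from $\lim_{u\to+\infty}f_\beta(u)=-\beta$ together with $f_\beta'(u)=O(u^{-2})$ (property (\ref{cond_f})) we get, by integrating $f_\beta'$ from $u$ to $+\infty$, that $f_\beta(u)=-\beta+O(u^{-1})$ as $u\to+\infty$. Everything else is substitution and splitting of integrals. Note also that $\eta>1$ (since $\eta=1+(1+\xi)/\alpha$ by (\ref{param})), so $1-\eta<0$ and $\xi+1>0$; this controls which exponentials decay.

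For the two ``decaying'' estimates I argue directly. In (\ref{w1_asymp-}), as $\theta\to-\infty$ the argument $e^{(\xi+1)\theta+d_1}\to0$, so the integrand of $w_1$ is $O(e^{(\xi+1)\theta})$; integrating from $-\infty$ to $\tau$ gives $w_1(\tau)=O(e^{(\xi+1)\tau})$ and in passing shows the improper integral converges. In (\ref{w2_asymp-}), since $1-\eta<0$ the argument $e^{(1-\eta)\theta+d_*}\to+\infty$ as $\theta\to-\infty$, so $f_\beta(e^{(1-\eta)\theta+d_*})+\beta=O(e^{(\eta-1)\theta})$, and integration yields the remainder $O(e^{(\eta-1)\tau})$.

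For (\ref{w1_asymp+}) I substitute $\psi=(\xi+1)\theta+d_1$, writing $w_1(\tau)=\frac{1}{\xi+1}\int_{-\infty}^{\rho}(f_\beta(e^\psi)-1)\,d\psi$ with $\rho=(\xi+1)\tau+d_1$, split the range at $0$, and on $[0,\rho]$ rewrite $f_\beta(e^\psi)-1=(f_\beta(e^\psi)+\beta)-(\beta+1)$. After the substitution $u=e^\psi$, $\int_{-\infty}^0(f_\beta(e^\psi)-1)\,d\psi$ and $\int_0^{+\infty}(f_\beta(e^\psi)+\beta)\,d\psi$ assemble into $d_2$; the tail $\int_\rho^{+\infty}(f_\beta(e^\psi)+\beta)\,d\psi=O(e^{-\rho})=O(e^{-(\xi+1)\tau})$ by the $O(u^{-1})$ estimate; and expanding $\rho$ produces $-(\beta+1)\tau+d_3$ with $d_3=\frac{d_2-(\beta+1)d_1}{\xi+1}$. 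Estimate (\ref{w2_asymp+}) is the same with $\psi=(1-\eta)\theta+d_*$; the only point of care is that $1-\eta<0$ reverses orientation, so $\int_{-\infty}^{\tau}(\cdot)\,d\theta=\frac{1}{\eta-1}\int_{\rho'}^{+\infty}(f_\beta(e^\psi)+\beta)\,d\psi$ with $\rho'=(1-\eta)\tau+d_*\to-\infty$; splitting at $0$, peeling off $-(\beta+1)\rho'$, and combining with the prefactor $-(\beta+1)\tau$ cancels the linear terms and leaves $d_4=\frac{d_2-(\beta+1)d_*}{\eta-1}$ with remainder $O(e^{\rho'})=O(e^{(1-\eta)\tau})$. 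I expect no genuine obstacle: the work is entirely in (i) deriving $f_\beta(u)=-\beta+O(u^{-1})$ from $f_\beta'(u)=O(u^{-2})$ and (ii) tracking the sign of $1-\eta$ so the constants come out exactly as $d_3$ and $d_4$ — bookkeeping rather than a conceptual difficulty.
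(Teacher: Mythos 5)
Your proposal is correct and follows essentially the same route as the paper: the same endpoint asymptotics $f_\beta(u)=1+O(u)$ at $0$ and $f_\beta(u)=-\beta+O(u^{-1})$ at $+\infty$, the same splitting of the integral at the point where the exponent vanishes (the paper splits at $\theta=-d_1/(\xi+1)$ before substituting $u=e^{(\xi+1)\theta+d_1}$, you substitute first and split at $\psi=0$), and the same tail estimates yielding $d_3$ and $d_4$, with your orientation-reversal remark for $1-\eta<0$ matching the paper's ``proved similarly'' case for $w_2$.
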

\begin{proof}
The equality (\ref{w1_asymp-}) follows from the asymptotic representation $ f_\beta(u)=1+O(u)$ as $u\to0$. To prove (\ref{w1_asymp+}), we transform the integral
\begin{multline*}
\int\limits_{-\infty}^{\tau}( f_\beta(e^{(\xi+1)\theta+d_1})-1)d\theta=
-(\beta+1)\tau-\frac{d_1(\beta+1)}{\xi+1}+\\\int\limits_{-\infty}^{-d_1/(\xi+1)}( f_\beta(e^{(\xi+1)\theta+d_1})-1)d\theta+\int\limits_{-d_1/(\xi+1)}^{+\infty}( f_\beta(e^{(\xi+1)\theta+d_1})+\beta)d\theta-
\int\limits_{\tau}^{+\infty}( f_\beta(e^{(\xi+1)\theta+d_1})+\beta)d\theta. 
\end{multline*}
We make the substitution $u=e^{(\xi+1)\theta+d_1}$ in the first two integrals, and for the third one, we apply the asymptotic equality $ f_\beta(u)=-\beta +O(u^{-1})$ as $u\to+\infty$. As a result, we obtain the required equality (\ref{w1_asymp+}).

Formulas (\ref{w2_asymp-}) and (\ref{w2_asymp+}) are proved similarly.

\end{proof}

Let us introduce a set of initial functions for equation (\ref{eq_qlambda}):
\begin{equation}
    \label{init_set}
    \psi\in C[-h-\sigma,-\sigma],\ \psi(-\sigma)=0,\ -\iota\leqslant\psi(t)-d + (\eta-1)\sigma \leqslant-\nu \text{ as } t\in[-h-\sigma,-\sigma].
\end{equation}
Let us prove the following theorem about the asymptotic behavior of the solution to equation (\ref{eq_qlambda}).

\begin{theorem}\label{th_asymp_q}
Let the parameters $\beta, \xi, \eta, h$ satisfy the conditions of Theorem \ref{th_pnd} and let $\lambda\gg1$. Let $\sigma>0$ satisfy the inequalities (\ref{cond_sigma}) and (\ref{cond_sigma2}).
Then, equation (\ref{eq_qlambda}) with an arbitrary initial function $\psi$ from the class (\ref{init_set}) has a solution $q^*_\lambda(t)$ with the following asymptotic behavior
$$
q^*_{\lambda}(t)= 
\begin{cases}
    q^*_{\lambda,1}(t), & t\in[-\sigma,h+r_1+\sigma],\\
    q^*_{\lambda,k}(t), & t\in[h+r_{k-1}+\sigma,h+r_{k}+\sigma],\ k=2,\ldots,n,
    \end{cases}
$$
where
\begin{equation}\label{q1}
    q^*_{\lambda,1}(t)= 
\begin{cases}
    O(e^{-\nu\lambda}), & t\in[-\sigma,h+l_1-\sigma],\\
    \frac{1}{\lambda}w_1(\tau)|_{\tau=(t-h-l_1)\lambda}+O(e^{-\nu\lambda}), & t\in[h+l_1-\sigma,h+l_1+\sigma],\\
    -(\beta+1)(t-h-l_1)+\frac{d_3}{\lambda}+O(e^{-\nu\lambda}), & t\in[h+l_1+\sigma,h+r_1-\sigma],\\
    -c_1+\frac{1}{\lambda}w_2(\tau)|_{\tau=(t-h-r_1)\lambda}+\frac{d_3}{\lambda}+O(e^{-\nu\lambda}), & t\in[h+r_1-\sigma,h+r_1+\sigma],\\
    \end{cases}
\end{equation}
\begin{equation}\label{qk}
q^*_{\lambda,k}(t)= 
\begin{cases}
    -c_{k-1}+\frac{A_{k-1}}{\lambda}+O(e^{-\nu\lambda}), &t\in[h+r_{k-1}+\sigma,h+l_{k}-\sigma],
    \\
    -c_{k-1}+\frac{A_{k-1}}{\lambda}+\frac{1}{\lambda}w_1(\tau)|_{\tau=(t-h-l_{k})\lambda}+ O(e^{-\nu\lambda}), & t\in[h+l_{k}-\sigma,h+l_{k}+\sigma],\\
    -c_{k-1}-(\beta+1)(t-h-l_{k})+\frac{A_{k-1}+d_3}{\lambda}+O(e^{-\nu\lambda}), & t\in[h+l_{k}+\sigma,h+r_{k}-\sigma],\\
    -c_{k}+\frac{1}{\lambda}w_2(\tau)|_{\tau=(t-h-r_{k})\lambda}+\frac{A_{k-1}+d_3}{\lambda}+O(e^{-\nu\lambda}), & t\in[h+r_{k}-\sigma,h+r_{k}+\sigma].
    \end{cases}
\end{equation}
where
$$A_k=k(d_3+d_4)\mbox{ при } k=1,\ldots,n.$$
The asymptotic expressions here are written under the assumption $\lambda\to\infty$.
\end{theorem}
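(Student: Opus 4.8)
The plan is to integrate equation (\ref{eq_qlambda}) step by step along the intervals on which the sign of the argument $p^*_\lambda(t-h)+q(t-h)$ is known, mirroring the construction of $q^*(t)$ in Section 4.1 and the analysis of $p^*_\lambda(t)$ in Theorem \ref{th_asymp}, but now tracking the $O(1/\lambda)$ corrections that appear because $f_\beta(e^{\lambda(\cdot)})$ is smooth rather than relay. First I would observe that on $[-\sigma, h+l_1-\sigma]$ the delayed argument $p^*_\lambda(t-h)+q(t-h)$ is bounded away from zero by a negative constant (uniformly in $\lambda$, using the initial class (\ref{init_set}), the bound $\psi(t)-d+(\eta-1)\sigma\le -\nu$, and the asymptotics (\ref{sol_p*lambda}) showing $p^*_\lambda(t-h)$ stays near $-d+(\eta-1)\sigma$-type values plus small terms): hence $f_\beta(e^{\lambda(p^*_\lambda(t-h)+q(t-h))})=1+O(e^{-\nu\lambda})$, so $\dot q=O(e^{-\nu\lambda})$ and, integrating from $q(-\sigma)=0$, we get $q^*_{\lambda,1}(t)=O(e^{-\nu\lambda})$ on that interval. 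This requires a small fixed-point/continuity argument to guarantee $q$ itself stays $O(e^{-\nu\lambda})$ so the sign of the argument is not destroyed; this is standard once the exponential smallness of $\dot q$ is in hand.

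Next I would treat the transition layers. On $[h+l_1-\sigma, h+l_1+\sigma]$ we substitute the already-established asymptotics $p^*_\lambda(s-h)=(1+\xi)(s-h-l_1)+d_1/\lambda+\Delta$ from (\ref{sol_p*lambda}) (shifted and using $l_{\lambda,1}=l_1+O(1/\lambda)$), plug into $f_\beta$, rescale $\tau=(t-h-l_1)\lambda$, replace the lower limit $-\lambda\sigma$ by $-\infty$ at the cost of $O(e^{-(1+\xi)\lambda\sigma})$ using (\ref{w1_asymp-}), and recognize the integral as $\frac1\lambda w_1(\tau)$; here one uses an estimate of the form $f_\beta(e^{\lambda\phi+O(e^{-\mu\lambda})})=f_\beta(e^{\lambda\phi})+O(e^{-\mu\lambda})$ exactly as (\ref{gO-g=O}) was proved, relying on $f_\beta'(u)u$ bounded from (\ref{cond_f}). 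On the constant-slope interval $[h+l_1+\sigma, h+r_1-\sigma]$ the delayed argument is positive and bounded away from zero, so $f_\beta(e^{\lambda(\cdot)})=-\beta+O(e^{-\nu\lambda})$ and $\dot q=-(\beta+1)+O(e^{-\nu\lambda})$, giving the affine piece with the accumulated constant $d_3/\lambda$ coming out of (\ref{w1_asymp+}); on $[h+r_1-\sigma,h+r_1+\sigma]$ we repeat the layer analysis with $w_2$ via (\ref{w2}), (\ref{w2_asymp-}), (\ref{w2_asymp+}). The constant $-c_1$ appears because $\dot q=-(\beta+1)$ over a layer of $t$-length $\approx r_1-l_1$, matching (\ref{c_{k}}); one checks $c_1=(1+\beta)(r_1-l_1)$ with $r_1-l_1$ read off from (\ref{roots}).

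Then I would set up the induction on $k$. The inductive hypothesis is that at $t=h+r_{k-1}+\sigma$ the value of $q^*_\lambda$ equals $-c_{k-1}+A_{k-1}/\lambda+O(e^{-\nu\lambda})$ with $A_{k-1}=(k-1)(d_3+d_4)$; the point is that each pair of layers contributes an extra $d_3/\lambda$ (from exiting an $l$-layer via $w_1$) and an extra $d_4/\lambda$ (from exiting an $r$-layer via $w_2$), while the interior segments of positivity/negativity contribute only exponentially small corrections to the slope. To close the induction I must verify two sign facts on $[h+r_{k-1}+\sigma,h+r_k+\sigma]$: that $p^*_\lambda(t-h)+q^*_\lambda(t-h)$ is negative and bounded away from $0$ on the constant stretches and positive and bounded away from $0$ on $[h+l_k,h+r_k]$, using that the relay solution $y^*(t)=p^*(t)+q^*(t)$ from Theorem \ref{th_pnd} has exactly this sign pattern with the $O(1/\lambda)$ and $O(e^{-\nu\lambda})$ perturbations too small to flip it (here (\ref{cond_sigma2}) keeps the $\sigma$-neighborhoods disjoint), and that the zeros of $p^*_\lambda(t-h)+q^*_\lambda(t-h)$ are within $\sigma$ of $h+l_k$, $h+r_k$ — which follows by the same implicit computation used above for $l_{\lambda,k}$, now including the contribution of $q$.

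The main obstacle is the self-consistency of the sign bookkeeping in the transition layers: inside an $l$- or $r$-layer the quantity $p^*_\lambda(t-h)+q^*_\lambda(t-h)$ passes through zero, so $f_\beta(e^{\lambda(\cdot)})$ is genuinely $O(1)$ there and the naive "argument bounded away from zero" estimate fails. One must show that on the layer the leading behavior of the argument is governed by $p^*_\lambda$ alone — because $q^*_\lambda$ is, on the relevant layer, either $O(e^{-\nu\lambda})$ (first layer) or of the form $-c_{k-1}+O(1/\lambda)$, a constant to leading order that shifts $d_1$ to $d_1-(1+\xi)c_{k-1}/\ldots$? — no: in fact the zero of the sum sits inside the layer precisely because $p^*_\lambda$ is linear with slope $1+\xi$ and crosses $-q^*_\lambda\approx c_{k-1}$; so the correct rescaled variable is $\tau=(t-h-l_{\lambda,k})\lambda$ centered at the true zero, and the integral still produces $w_1(\tau)$ up to the constants already accounted for. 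Making this precise — identifying which affine function to subtract before rescaling, and controlling the error from the finite lower limit and from the $O(e^{-\mu\lambda})$ inside the exponential uniformly across all $n$ layers — is the technically delicate core; everything else is a bounded iteration of the estimates in Theorem \ref{th_asymp} together with Gronwall-type control of $q$ on each sub-interval.
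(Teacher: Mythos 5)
Your proposal follows essentially the same route as the paper's proof: the same step method over $[-\sigma,h+l_1-\sigma]$, the $\sigma$-layers around $h+l_k$, $h+r_k$ and the constant-slope stretches, with the sign of the delayed argument read from the relay solution, the layer functions $w_1$, $w_2$ and their asymptotics producing the matching constants $d_3/\lambda$, $d_4/\lambda$ and the accumulation $A_k=k(d_3+d_4)$, and an induction in $k$. The paper controls the layer remainder by the ansatz $q=-c_{k-1}+\frac{A_{k-1}}{\lambda}+\frac1\lambda w_{1}+\Delta$ together with the bound $|f_\beta(x_1)-f_\beta(x_2)|\leqslant M|x_1-x_2|/(1+\min\{x_1^2,x_2^2\})$ from the cited work, which is the same device as your analogue of (\ref{gO-g=O}); this difference is cosmetic.

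The one substantive misstep is in your final paragraph. For every interval treated in this theorem ($k\leqslant n$) the delayed time $t-h$ lies in $[l_k-\sigma,l_k+\sigma]$ or $[r_k-\sigma,r_k+\sigma]$, and since the roots (\ref{roots}) are roots of $p^*$ itself (they lie in the region $[0,h+l_1]$ where the relay $q^*$ vanishes, $h>l_n$ and $2\sigma<l_1$), the already constructed $q^*_\lambda(t-h)$ there is still $O(e^{-\nu\lambda})$. Hence the argument of $f_\beta$ is $p^*_\lambda(t-h)+O(e^{-\nu\lambda})$, its sign change occurs at the zero $l_{\lambda,k}=l_k+O(1/\lambda)$ of $p^*_\lambda$, and the $O(1/\lambda)$ offset is absorbed into the constant in the exponent of $w_1$ in (\ref{w1}); the constant $-c_{k-1}$ enters only as the initial value of the \emph{undelayed} $q$ at the start of the layer, an additive constant carried through the Cauchy problem. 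Your closing assertion that the crossing happens where $p^*_\lambda$ "crosses $-q^*_\lambda\approx c_{k-1}$" would place the break points where $p^*=c_{k-1}$ rather than at $h+l_k$, $h+r_k$, contradicting formulas (\ref{q1})--(\ref{qk}) and the relay picture of Theorem \ref{th_pnd}; a level-shifted crossing of exactly that kind does occur, but only for the later roots $k>n$, which is why the paper introduces the generalized layer functions $W_{l,k}$, $W_{r,k}$ with constants $L_k$, $R_k$ in Theorem \ref{th_asymp_q2}, outside the scope of the present statement. With that correction — in every layer substitute the exponentially small delayed $q^*_\lambda$ and the linear piece of $p^*_\lambda$, exactly as you did for $k=1$ — your induction reproduces the paper's argument; centering the fast variable at $l_{\lambda,k}$ instead of $l_k$ is admissible but then requires re-expanding to recover the stated form of (\ref{qk}).
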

\begin{proof}

First, we prove that on the interval $[-\sigma,h+r_1+\sigma]$, the function $q^*_\lambda(t)$ is described by formula (\ref{q1}). We will sequentially find asymptotic formulas on the intervals $[-\sigma,h+l_1-\sigma]$, $[h+l_1-\sigma,h+l_1+\sigma]$, $[h+l_1+\sigma,h+r_1-\sigma]$, $[h+r_1-\sigma,h+r_1+\sigma]$ using the step method, relying on the solution $q^*(t)$ of the relay equation (\ref{eq_rele}). Then, by induction, we will prove formula (\ref{qk}).

1. The first considered interval is $[-\sigma,h+l_1-\sigma]$. On the interval $[-\sigma,h-\sigma]$, we have $q(t-h)=\psi(t-h)$, then taking into account the properties (\ref{init_set}) of the function $\psi$, the argument of the function $ f_\beta$ in (\ref{eq_qlambda}) is exponentially small:
$$\exp(\lambda(p^*_\lambda(t-h)+\psi(t-h)))=O(e^{-\nu \lambda}).$$

Using the asymptotic equality $ f_\beta(\varepsilon)=1+O(\varepsilon)$ as $\varepsilon\to0$, we obtain that on the current interval, $q^*_\lambda(t)$ satisfies the Cauchy problem $\dot{q}=O(e^{-\nu\lambda})$, $q|_{t=-\sigma}=0$, from which $q^*_\lambda(t)=O(e^{-\nu\lambda})$. This asymptotic form will be preserved as long as $p^*_\lambda(t-h)+q^*_\lambda(t-h)$ is strictly negative. Thus, we guarantee that
\begin{equation}
    \label{sol_q1}
    q^*_\lambda(t)=O(e^{-\nu\lambda})\text{ as }t\in[-\sigma,h+l_1-\sigma].
\end{equation}

2. The next interval of consideration is $[h+l_1-\sigma,h+l_1+\sigma]$. At the point $l_1$ the function (\ref{q*}) undergoes a break. We assume that the solution has the form
\begin{equation}
    \label{x*lambda_2}
    q^*_\lambda(t)=\frac{1}{\lambda}w_1(\tau)|_{\tau=(t-h-l_1)\lambda}+\Delta,
\end{equation}
where $\Delta$ is the remainder, the smallness of which needs to be proved. 

Substitute (\ref{sol_q1}) and (\ref{x*lambda_2}) into equation (\ref{eq_qlambda}).
$$\frac{{\rm d} w_1}{{\rm d}  \tau}\Big|_{\tau=(t-h-l_1)\lambda}+\dot{\Delta}=
 f_\beta\Big(\exp\big({\lambda \big(p^*_\lambda(t-h)+ O(e^{-\nu\lambda})\big)}\big)\Big)-1.$$
The interval under consideration $[h+l_1-\sigma,h+l_1+\sigma]\subset[h+\sigma,h+t^*_{\lambda}-\sigma]$, therefore here
$$p^*_\lambda(t-h)=(1+\xi)(t-h-l_1) + \frac{d_1}{\lambda} +  O(\lambda e^{-\mu\lambda}).$$ 
Since we have control over $\nu$ we will choose $\nu>0$ such that $\lambda e^{-\mu\lambda}=O(e^{-\nu\lambda})$.
Then, with (\ref{w1})
\begin{equation}\label{eq_q2}
    \dot{\Delta}= f_\beta\Big(\exp\big(\lambda\big((1+\xi)(t-l_1-h)+\frac{d_1}{\lambda}+O(e^{-\nu\lambda})\big)\big)\Big)- f_\beta\Big(\exp\big(\lambda\big((1+\xi)(t-l_1-h)+\frac{d_1}{\lambda}\big)\big)\Big).
\end{equation}
Using the estimate 
$$| f_\beta(x_1)- f_\beta(x_2)|\leqslant\frac{M}{1+\min\{x_1^2,x_2^2\}}|x_1-x_2|\quad \forall x_1,x_2\in\mathbb{R}_+,$$
where
$$M=\sup_{x\in\mathbb{R}_+}(1+x^2)| f_\beta'(x)|<\infty,$$
which is proven in the work  \cite{hutch} for functions of the form (\ref{cond_f}),
we obtain the exponential smallness of the right-hand side of equation (\ref{eq_q2}). 
Then, taking into account the asymptotic behavior (\ref{w1_asymp-}) as $\tau\to-\infty$ of the function $w_1$ when calculating the initial value $\Delta|_{t=h+l_1-\sigma}$, we obtain the Cauchy problem to determine $\Delta$:
    $$\dot{\Delta}=O(e^{-\nu\lambda}),\quad
    \Delta|_{t=h+l_1-\sigma}=O(e^{-\nu\lambda}).$$
Thus, $\Delta=O(e^{-\nu\lambda})$, therefore
\begin{equation}
    \label{sol_q2}
    q^*_\lambda(t)=\frac{1}{\lambda}w_1(\tau)|_{\tau=(t-h-l_1)\lambda}+O(e^{-\nu\lambda})\text{ при }t\in[h+l_1-\sigma,h+l_1+\sigma].
\end{equation}

3. Next, consider the interval  $[h+l_1+\sigma,h+r_1-\sigma]$. Here,
$$q(t-h)=q^*_\lambda(t-h)=O(e^{-\nu\lambda}),$$
\begin{multline*}
p^*_\lambda(t-h)\geqslant\min\{p^*_\lambda(l_1+\sigma),p^*_\lambda(r_1-\sigma)\}=
\\
\min\{(1+\xi)\sigma + \frac{d_1}{\lambda} +  O(\lambda e^{-\mu\lambda}),((\eta-1)\sigma+ \frac{d_*}{\lambda}+O(\lambda e^{-\mu\lambda})\}.
\end{multline*}
Then, on the interval being considered $p^*_\lambda(t-h)+q^*_\lambda(t-h)>
\frac12\min\{(1+\xi)\sigma ,(\eta-1)\sigma\}$.
We choose $\nu>0$ such that $\frac12\min\{(1+\xi)\sigma ,(\eta-1)\sigma\}\geqslant\nu$, then here $p^*_\lambda(t-h)+q^*_\lambda(t-h)>\nu$. Therefore, the argument of the function $ f_\beta$ in (\ref{eq_qlambda})  is exponentially large and has an order of $O(e^{\nu\lambda})$. Considering the asymptotic equality $ f_\beta(u)=-\beta+O(u^{-1})$ as $u\to+\infty$, we obtain that at the current step $q$ is determined by the equation
\begin{equation}
    \label{eq_q3}
    \dot{q}=-\beta-1+O(e^{-\nu\lambda}).
\end{equation}
Substituting the value $t=h+l_1+\sigma$ into (\ref{sol_q2}) and utilizing the asymptotic properties (\ref{w1_asymp+}) of the function $w_1$ as $\tau\to+\infty$, we obtain the initial value
\begin{equation}
    \label{init_q3}
    q|_{t=h+l_1+\sigma}=-(\beta+1)\sigma+\frac{d_3}{\lambda}+O(e^{-\nu\lambda}).
\end{equation}
By solving the initial value problem (\ref{eq_q3}), (\ref{init_q3}), we find that
\begin{equation}
    \label{sol_q3}
    q^*_\lambda(t)=-(\beta+1)(t-h-l_1)+\frac{d_3}{\lambda}+O(e^{-\nu\lambda}) \text{ при } t\in[h+l_1+\sigma,h+r_1-\sigma].
\end{equation}

4. Construct $q^*_\lambda(t)$ on the interval $[h+r_1-\sigma,h+r_1+\sigma]$. We seek the solution here in the form
\begin{equation}
    \label{x*lambda_4}
    q^*_\lambda(t)=-c_1+\frac{1}{\lambda}w_2(\tau)|_{\tau=(t-h-r_1)\lambda}+\frac{d_3}{\lambda}+\Delta,
\end{equation}
where $\Delta$ is the next remainder term, the smallness of which needs to be proven.

Substitute (\ref{x*lambda_4}) and (\ref{sol_q1}) into (\ref{eq_qlambda}):
$$\frac{{\rm d} w_2}{{\rm d}  \tau}\Big|_{\tau=(t-h-r_1)\lambda}+\dot{\Delta}=
 f_\beta\Big(\exp\big({\lambda \big(p^*_\lambda(t-h)+ O(e^{-\nu\lambda})\big)}\big)\Big)-1.$$
Since $[h+r_1-\sigma,h+r_1+\sigma]\subset[h+t^*_{\lambda}+\sigma,h+T^*_{\lambda}-\sigma]$, we have
$$p^*_\lambda(t-h)= (1-\eta)(t-h-r_1)+ \frac{d_*}{\lambda}+O(\lambda e^{-\mu\lambda}).$$ 
From here, taking into account  (\ref{w2}) and the choice of $\nu$, we obtain
\begin{equation}\label{eq_q4}
    \dot{\Delta}= f_\beta\Big(\exp\big(\lambda\big((1-\eta)(t-h-r_1)+ \frac{d_*}{\lambda}+O(e^{-\nu\lambda})\big)\big)\Big)- f_\beta\Big(\exp\big(\lambda\big((1-\eta)(t-h-r_1)+ \frac{d_*}{\lambda}\big)\big)\Big).
\end{equation}
To determine the initial condition, equate the values of the expressions  (\ref{sol_q3}) and (\ref{x*lambda_4}) as $t=h+r_1-\sigma$. Then, taking into account that $c_1=(\beta+1)(r_1-l_1)$ and the asymptotic equality (\ref{w2_asymp-}) for  $w_2(\tau)$ as $\tau\to -\infty$, we obtain
$$\Delta|_{t=h+r_1-\sigma}=O(e^{-\nu\lambda}).$$
The right-hand side of equation (\ref{eq_q4}) is evaluated similarly to how it was done in item 2 of this proof. We obtain that $\Delta$ satisfies the initial Cauchy problem $\dot{\Delta}=O(e^{-\nu\lambda})$, $\Delta|_{t=h+r_1-\sigma}=O(e^{-\mu\lambda})$, hence $\Delta=O(e^{-\mu\lambda})$. Thus,
   $$q^*_\lambda(t)=-c_1+\frac{1}{\lambda}w_2(\tau)|_{\tau=(t-h-r_1)\lambda}+\frac{d_3}{\lambda}+O(e^{-\nu\lambda})\text{ for }t\in[h+r_1-\sigma,h+r_1+\sigma].$$

Next, we perform an induction step. Note that due to (\ref{sol_q1}), for $t\in[h-\sigma,2h+l_1-\sigma]$, the representation $q^*_\lambda(t)=O(e^{-\nu\lambda})$ holds. This means that the roots of the function $y^*_\lambda(t)$ differ from the roots of $p^*_\lambda(t)$ by an amount of order $O(e^{-\nu\lambda})$. Suppose that on the next interval, the function $q^*_\lambda(t)$ coincides with $q^*_{\lambda,k}(t)$, and we will prove that $q^*_{\lambda}(t)=q^*_{\lambda,k+1}(t)$ for $t\in[h+r_{k-1}+\sigma,h+r_{k}+\sigma]$, applying constructions similar to those given above.

\end{proof}

To describe the function $q$ in the vicinity of the break points $h+l_k$, $h+r_k$, $k=n+1,\ldots,n+m$, we will need functions similar to (\ref{w1}), (\ref{w2}), but described in a more general form, since the corresponding roots $l_k$, $r_k$, $k=n+1,\ldots,n+m$ of the function $y^*(t)$ are not described by exact formulas to avoid case considerations. 
We introduce the following functions:
\begin{equation}
    \label{W1}
    W_{l,k}(\tau)=\int\limits_{-\infty}^{\tau}\Big( f_\beta\big(\exp\big({\dot{y}^*(l_{k})\theta+L_{k}}\big)\big)-1\Big)d\theta,
\end{equation}
    \begin{equation}
    \label{W2}
    W_{r,k}(\tau)=-(\beta+1)\tau+\int\limits_{-\infty}^{\tau}\Big( f_\beta\big(\exp\big({\dot{y}^*(r_{k})\theta+R_{k}}\big)\big)+\beta\Big)d\theta.
\end{equation}
Here,
\begin{equation}
    \label{Bk}
    L_{k}=\lim\limits_{\lambda\to+\infty}\lambda(y^*_\lambda(t)-y^*(t)) \text{ as } t\in[l_{k}-\sigma,l_{k}+\sigma],
\end{equation}
\begin{equation}
    \label{Ek}
    R_{k}=\lim\limits_{\lambda\to+\infty}\lambda(y^*_\lambda(t)-y^*(t)) \text{ as } t\in[r_{k}-\sigma,r_{k}+\sigma].
\end{equation}
The limits (\ref{Bk}) and (\ref{Ek}) exist and have finite values due to the form of the constructed function  $y^*_\lambda(t)=p^*_\lambda(t)+q^*_\lambda(t)$.

For the functions (\ref{W1}), (\ref{W2}) the following statement holds.
\begin{lemma}
    \label{lem_W_asymp}
\begin{equation}
    \label{W1_asymp-}
    W_{l,k}(\tau)=O(e^{\dot{y}^*(l_{k})\tau})\text{ as }\tau\to-\infty,
\end{equation}
\begin{equation}
    \label{W1_asymp+}
    W_{l,k}(\tau)=-(\beta+1)\tau+D_{3,k}+O(e^{-\dot{y}^*(l_{k})\tau}) \text{ as } \tau\to +\infty,
\end{equation}
where $$D_{3,k}=\frac{d_2-(\beta+1)L_{k}}{\dot{y}^*(l_{k})},$$
\begin{equation}
    \label{W2_asymp-}
    W_{r,k}(\tau)=-(\beta+1)\tau+O(e^{-\dot{y}^*(r_{k})\tau}) \text{ as } \tau\to -\infty,
\end{equation}
\begin{equation}
    \label{W2_asymp+}
    W_{r,k}(\tau)=D_{4,k}+O(e^{\dot{y}^*(r_{k})\tau}) \text{ as } \tau\to +\infty,
\end{equation}
where
$$D_{4,k}=\frac{d_2-(\beta+1)R_k}{\dot{y}^*(r_{k})}.$$
\end{lemma}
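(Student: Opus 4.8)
Here the functions $W_{l,k}$ and $W_{r,k}$ are the general counterparts of $w_1$ and $w_2$ from Lemma \ref{lem_w_asymp}: $W_{l,k}$ is obtained from $w_1$ by replacing the slope $\xi+1$ with $\dot{y}^*(l_k)$ and the offset $d_1$ with $L_k$, while $W_{r,k}$ is obtained from $w_2$ by replacing $1-\eta$ with $\dot{y}^*(r_k)$ and $d_*$ with $R_k$. Accordingly, the plan is to transcribe the proof of Lemma \ref{lem_w_asymp} almost verbatim. The only structural inputs to isolate first are: (i) $\dot{y}^*(l_k)>0$ and $\dot{y}^*(r_k)<0$, which hold by the very definition of $l_k$ and $r_k$ as the zeros of $y^*(t)$ near which the function increases, respectively decreases (the admissible slopes of $y^*=p^*+q^*$ are $1+\xi$, $\xi-\beta$, $1-\eta$, $-\eta-\beta$, and only the positive ones occur at $l_k$, only the negative ones at $r_k$); and (ii) the finiteness of $L_k$ and $R_k$, which is exactly the remark made just before the lemma, coming from the explicit form $y^*_\lambda=p^*_\lambda+q^*_\lambda$. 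These two facts ensure that in $W_{l,k}$ the exponent $\dot{y}^*(l_k)\theta+L_k$ tends to $-\infty$ as $\theta\to-\infty$ (so $f_\beta$ of it tends to $1$), whereas in $W_{r,k}$ it tends to $+\infty$ as $\theta\to-\infty$ (so $f_\beta$ of it tends to $-\beta$).

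For $W_{l,k}$: equality (\ref{W1_asymp-}) follows from $f_\beta(u)=1+O(u)$ as $u\to0$, since then the integrand in (\ref{W1}) is $O(\exp(\dot{y}^*(l_k)\theta))$ and integration from $-\infty$ to $\tau$ yields $O(\exp(\dot{y}^*(l_k)\tau))$. For (\ref{W1_asymp+}) I split the integral at the zero $\theta_0=-L_k/\dot{y}^*(l_k)$ of the linear exponent: on $(-\infty,\theta_0]$ the argument of $f_\beta$ lies in $(0,1]$ and $f_\beta-1$ is integrable, while on $[\theta_0,+\infty)$ I write $f_\beta-1=(f_\beta+\beta)-(\beta+1)$; the constant $-(\beta+1)$ produces the term $-(\beta+1)\tau$, the tail $\int_\tau^{+\infty}(f_\beta+\beta)\,d\theta=O(\exp(-\dot{y}^*(l_k)\tau))$ by $f_\beta(u)+\beta=O(u^{-1})$ as $u\to+\infty$, and the two convergent integrals together with the constant $(\beta+1)\theta_0=-(\beta+1)L_k/\dot{y}^*(l_k)$ give, after the substitution $u=\exp(\dot{y}^*(l_k)\theta+L_k)$, the value $\frac{1}{\dot{y}^*(l_k)}\bigl(\int_0^1\frac{f_\beta(u)-1}{u}du+\int_1^{+\infty}\frac{f_\beta(u)+\beta}{u}du\bigr)-\frac{(\beta+1)L_k}{\dot{y}^*(l_k)}=\frac{d_2-(\beta+1)L_k}{\dot{y}^*(l_k)}=D_{3,k}$.

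For $W_{r,k}$ the argument is the mirror image, now working with the integrand $f_\beta+\beta$ in (\ref{W2}). For (\ref{W2_asymp-}), as $\theta\to-\infty$ the argument of $f_\beta$ tends to $+\infty$, so $f_\beta+\beta=O(\exp(-\dot{y}^*(r_k)\theta))$, which is integrable near $-\infty$ (here $-\dot{y}^*(r_k)>0$), giving the $O(\exp(-\dot{y}^*(r_k)\tau))$ remainder alongside the explicit $-(\beta+1)\tau$. For (\ref{W2_asymp+}) I split at $\theta_1=-R_k/\dot{y}^*(r_k)$: keep $f_\beta+\beta$ on $(-\infty,\theta_1]$ (integrable, argument $\ge 1$ there) and write $f_\beta+\beta=(f_\beta-1)+(\beta+1)$ on $[\theta_1,+\infty)$, so that the $+(\beta+1)$ cancels the leading $-(\beta+1)\tau$; the tail $\int_\tau^{+\infty}(f_\beta-1)\,d\theta=O(\exp(\dot{y}^*(r_k)\tau))$ and the convergent integrals, after substituting $u=\exp(\dot{y}^*(r_k)\theta+R_k)$, combine with $(\beta+1)\theta_1$ into the constant $D_{4,k}$. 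The one place that needs care — and the only "obstacle," such as it is — is the orientation of this last substitution: since $\dot{y}^*(r_k)<0$, the map $\theta\mapsto u$ is decreasing, so it reverses the limits of integration, and this sign flip is exactly what must be tracked to land on the stated form of $D_{4,k}$; apart from that the whole lemma is a routine repetition of the bookkeeping already carried out for Lemma \ref{lem_w_asymp}.
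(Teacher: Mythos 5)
Your plan is exactly the paper's (implicit) one: the paper gives no separate proof of Lemma~\ref{lem_W_asymp}, treating it as the verbatim analogue of Lemma~\ref{lem_w_asymp} with $\xi+1\mapsto\dot{y}^*(l_k)$, $d_1\mapsto L_k$, $1-\eta\mapsto\dot{y}^*(r_k)$, $d_*\mapsto R_k$, and your preliminary observations ($\dot{y}^*(l_k)>0$, $\dot{y}^*(r_k)<0$, finiteness of $L_k,R_k$) are precisely the inputs that make the transcription legitimate. Your treatment of $W_{l,k}$ is correct and yields $D_{3,k}$ as stated.

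For $W_{r,k}$, however, the one step you single out as needing care is exactly where your write-up goes off by a sign. Splitting at $\theta_1=-R_k/\dot{y}^*(r_k)$ gives
\begin{equation*}
W_{r,k}(\tau)=-(\beta+1)\theta_1+\int\limits_{-\infty}^{\theta_1}\bigl(f_\beta+\beta\bigr)d\theta+\int\limits_{\theta_1}^{+\infty}\bigl(f_\beta-1\bigr)d\theta-\int\limits_{\tau}^{+\infty}\bigl(f_\beta-1\bigr)d\theta,
\end{equation*}
i.e.\ the constant contains $-(\beta+1)\theta_1=+(\beta+1)R_k/\dot{y}^*(r_k)$, not ``$(\beta+1)\theta_1$'' as you write; and since the substitution $u=\exp(\dot{y}^*(r_k)\theta+R_k)$ reverses orientation, the two convergent integrals sum to $-d_2/\dot{y}^*(r_k)$. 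Hence the limiting constant is $\bigl(d_2-(\beta+1)R_k\bigr)/\bigl(-\dot{y}^*(r_k)\bigr)$, i.e.\ the stated fraction but with $|\dot{y}^*(r_k)|$ in the denominator. This is in fact the version consistent with the paper itself: for $k\le n$ one has $\dot{y}^*(r_k)=1-\eta$ and $R_k=d_*$, and the paper asserts $D_{4,k}=d_4=\bigl(d_2-d_*(\beta+1)\bigr)/(\eta-1)$ from Lemma~\ref{lem_w_asymp}, whose denominator is $\eta-1=-\dot{y}^*(r_k)$. So the formula for $D_{4,k}$ as printed in the statement carries a sign slip, and your claim that careful tracking of the reversed limits ``lands on the stated form'' only works because of the compensating sign error in the $(\beta+1)\theta_1$ term. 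Redo that bookkeeping and state the result with $-\dot{y}^*(r_k)$ (equivalently $|\dot{y}^*(r_k)|$) in the denominator; otherwise your asymptotics for $W_{r,k}$ would contradict the specialization $W_{r,k}=w_2$, $D_{4,k}=d_4$ used immediately after the lemma and in Theorem~\ref{th_asymp_q2}.
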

Note that, for $k=1\ldots,n$,  the functions $W_{l,k}(\tau)$, $W_{r,k}(\tau)$ and the constants $D_{3,k}$, $D_{4,k}$ coincide with the functions $w_1(\tau)$ $w_2(\tau)$ and the constants $d_3$, $d_4$, respectively.

The following fact is an extension of Theorem \ref{th_asymp_q}.

\begin{theorem}\label{th_asymp_q2}
Let the conditions of Theorem \ref{th_asymp_q} hold. Suppose that $d_*<0$.  Then, the following holds:
$$
q^*_\lambda(t)= 
\begin{cases}
    q^*_{\lambda,k}(t), &t\in[h+r_{k-1}+\sigma,h+r_{k}+\sigma],\ k=n+1,\ldots,n+m,\\
   -c_{n+m}+\frac{A_{n+m}}{\lambda}+O(e^{-\nu\lambda}), &t\in[h+r_{n+m}+\sigma,+\infty),
    \end{cases}
$$
$$
q^*_{\lambda,k}(t)= 
\begin{cases}
    -c_{k-1}+\frac{A_{k-1}}{\lambda}+O(e^{-\nu\lambda}), &t\in[h+r_{k-1}+\sigma,h+l_{k}-\sigma],\\
    -c_{k-1}+\frac{A_{k-1}}{\lambda}+\frac{1}{\lambda}W_{l,k}(\tau)|_{\tau=(t-h-l_{k})\lambda}+ O(e^{-\nu\lambda}), & t\in[h+l_{k}-\sigma,h+l_{k}+\sigma],\\
    -c_{k-1}-(\beta+1)(t-h-l_{k})+
    \frac{A_{k-1}+D_{3,k}}{\lambda}+O(e^{-\nu\lambda}), & t\in[h+l_{k}+\sigma,h+r_{k}-\sigma],\\
   -c_{k}+\frac{1}{\lambda}W_{r,k}(\tau)|_{\tau=(t-h-r_{k})\lambda}+
   \frac{A_{k-1}+D_{3,k}}{\lambda}+O(e^{-\nu\lambda}), & t\in[h+r_{k}-\sigma,h+r_{k}+\sigma],\\
   
    \end{cases}
$$
where
$$A_k=n(d_3+d_4)+\sum_{i=n+1}^{k}(D_{3,i}+D_{4,i})\mbox{ при } k=n+1,\ldots,n+m.$$
\end{theorem}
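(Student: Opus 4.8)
The plan is to continue the step‑by‑step construction used in the proof of Theorem~\ref{th_asymp_q}, pushing the induction past the index $k=n$ and then all the way to $t=+\infty$, and to bring in the hypothesis $d_*<0$ only at the very last (tail) stage. First I would establish Lemma~\ref{lem_W_asymp}: the asymptotics (\ref{W1_asymp-})--(\ref{W2_asymp+}) of $W_{l,k}$, $W_{r,k}$ are obtained word for word as in the proof of Lemma~\ref{lem_w_asymp}, with the relay slopes $\dot y^*(l_k)>0$, $\dot y^*(r_k)<0$ replacing the concrete numbers $1+\xi$, $1-\eta$ and with $L_k$, $R_k$ from (\ref{Bk}), (\ref{Ek}) replacing $d_1$, $d_*$; the limits $L_k$, $R_k$ are finite because on the $\sigma$‑neighbourhoods of $l_k$, $r_k$ one already has $p^*_\lambda=p^*+O(1/\lambda)$ by Theorem~\ref{th_asymp} and $q^*_\lambda=q^*+O(1/\lambda)$ by the inductive step. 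Then, for $k=n+1,\dots,n+m$, I would repeat verbatim the four‑interval scheme of Theorem~\ref{th_asymp_q} with $(w_1,w_2,d_3,d_4)$ replaced by $(W_{l,k},W_{r,k},D_{3,k},D_{4,k})$: on each flat interval $\dot q^*_\lambda=O(e^{-\nu\lambda})$; near a corner $l_k$ one inserts the ansatz $q=\tfrac1\lambda W_{l,k}((t-h-l_k)\lambda)+\Delta$ into (\ref{eq_qlambda}), uses $p^*_\lambda(t-h)=\dot y^*(l_k)(t-h-l_k)+L_k/\lambda+O(\lambda e^{-\mu\lambda})$ together with the Lipschitz‑type bound for $f_\beta$ from \cite{hutch} to get $\dot\Delta=O(e^{-\nu\lambda})$, and $\Delta|_{\text{start}}=O(e^{-\nu\lambda})$ from (\ref{W1_asymp-}); symmetrically near $r_k$ with $W_{r,k}$ and (\ref{W2_asymp-}). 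Propagating the boundary values through (\ref{W1_asymp+}), (\ref{W2_asymp+}) produces the constants $A_k=n(d_3+d_4)+\sum_{i=n+1}^k(D_{3,i}+D_{4,i})$ and the drops $c_k-c_{k-1}=(1+\beta)(r_k-l_k)$, i.e.\ exactly the stated $q^*_{\lambda,k}$. Here one must also know that $y^*_\lambda$ has a single simple zero in each $\sigma$‑neighbourhood of $l_k$, $r_k$ for $k\le n+m$: this follows from the uniform closeness $y^*_\lambda\to y^*$ on finite intervals (already built into the construction) plus the transversality of the relay zeros, after enlarging the list (\ref{cond_sigma2}) so that the relevant neighbourhoods up to $k=n+m$ stay pairwise disjoint; the number $m$ is the same as for the relay solution once $\lambda$ is large.

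The genuinely new part is the tail $t\ge h+r_{n+m}+\sigma$, and here $d_*<0$ enters. By (\ref{sol_p*lambda_per}) and Theorem~\ref{th_asymp}, $p^*_\lambda(t+jT^*_\lambda)=p^*_\lambda(t)+j(d_*/\lambda+\tilde\Delta)$ with $\tilde\Delta=O(\lambda e^{-\mu\lambda})$, so for $\lambda$ large $d_*/\lambda+\tilde\Delta\le d_*/(2\lambda)<0$ and the supremum of $p^*_\lambda$ over the $j$‑th period after $h+r_{n+m}$ does not exceed $(1+\xi)t^*-d+O(1/\lambda)+j\,d_*/(2\lambda)$. Since the restriction on $\beta$ in (\ref{param}) gives the strict gap $(1+\xi)t^*-d<c_n\le c_{n+m}$, setting $\delta_0:=c_{n+m}-(1+\xi)t^*+d>0$ and using the upper bound $q^*_\lambda\le -c_{n+m}+A_{n+m}/\lambda+O(e^{-\nu\lambda})$ on the tail (valid because $q^*_\lambda$ cannot increase as long as the delayed argument in (\ref{eq_qlambda}) keeps $y^*_\lambda$ negative), one gets $y^*_\lambda(t-h)\le -\delta_0/2-j\,|d_*|/(2\lambda)$ on the $j$‑th period. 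Hence $f_\beta(e^{\lambda y^*_\lambda(t-h)})=1+O(e^{\lambda y^*_\lambda(t-h)})$, i.e.\ $|\dot q^*_\lambda(t)|\le C\,e^{\lambda y^*_\lambda(t-h)}$, and integrating: the contribution of the $j$‑th period is dominated by the corner of $p^*_\lambda$, of height $\le -\delta_0+O(1/\lambda)+j\,d_*/(2\lambda)$, near which $e^{\lambda y^*_\lambda(t-h)}$ decays on both sides at rates proportional to $\lambda$, so that contribution is $O\!\big(\tfrac1\lambda e^{-\lambda\delta_0}e^{-j|d_*|/2}\big)$; the trough regions contribute even less. The crucial point is that $j\lambda\cdot|d_*|/(2\lambda)=j|d_*|/2$ is independent of $\lambda$, so $\sum_{j\ge0}e^{-j|d_*|/2}$ is a finite $\lambda$‑free constant and $\int_{h+r_{n+m}+\sigma}^{+\infty}|\dot q^*_\lambda(t)|\,dt=O(\tfrac1\lambda e^{-\lambda\delta_0})=O(e^{-\nu\lambda})$ for any $\nu<\delta_0$. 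Combined with $q^*_\lambda(h+r_{n+m}+\sigma)=-c_{n+m}+A_{n+m}/\lambda+O(e^{-\nu\lambda})$ from the last corner step (via (\ref{W2_asymp+}) and $A_{n+m}=A_{n+m-1}+D_{3,n+m}+D_{4,n+m}$), this yields $q^*_\lambda(t)=-c_{n+m}+A_{n+m}/\lambda+O(e^{-\nu\lambda})$ for all $t\ge h+r_{n+m}+\sigma$.

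I expect the main obstacle to be precisely this tail estimate: the crude bound ``$\dot q^*_\lambda=O(e^{-\nu\lambda})$, then integrate'' is useless over an infinite time interval, and one must extract genuine geometric‑in‑$j$ decay of the peak contributions $e^{\lambda(\text{peak of }p^*_\lambda)}$ with a ratio that does not depend on $\lambda$. This rests on two quantitative facts worth isolating: that the negative drift $d_*/\lambda$ in (\ref{sol_p*lambda_per}) dominates the remainder $\tilde\Delta=O(\lambda e^{-\mu\lambda})$ for $\lambda$ large, and that (\ref{param}) gives the strict gap $(1+\xi)t^*-d<c_n$. There is also a bookkeeping point to discharge with care: that past $h+r_{n+m}$ the delayed argument $y^*_\lambda(\cdot-h)$ never returns to $[0,+\infty)$, so the step‑method marching stays in the ``$\dot q^*_\lambda$ exponentially small'' regime forever — this uses that the first peak of $p^*_\lambda+q^*_\lambda$ after $h+r_{n+m}$ is already strictly negative and that all subsequent peaks are even lower. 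By contrast, the part $k=n+1,\dots,n+m$ is a routine repetition of the proof of Theorem~\ref{th_asymp_q}, the only genuine care being the accounting of the constants $L_k$, $R_k$, $D_{3,k}$, $D_{4,k}$, $A_k$ and the verification that the break points of $q^*_\lambda(\cdot-h)$ track $l_k+h$, $r_k+h$ up to $O(1/\lambda)$.
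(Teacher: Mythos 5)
Your proposal follows the same skeleton as the paper's proof: the indices $k=n+1,\dots,n+m$ are handled by repeating the four-interval step scheme of Theorem \ref{th_asymp_q} with $(w_1,w_2,d_3,d_4)$ replaced by $(W_{l,k},W_{r,k},D_{3,k},D_{4,k})$ (the paper dispatches this part in one sentence, together with the same remark you make about refining $\sigma$ so that the neighbourhoods of the zeros and corners of $y^*_\lambda$ stay disjoint), and the hypothesis $d_*<0$ enters only on the tail $t\ge h+r_{n+m}+\sigma$. Where you genuinely diverge is the tail. The paper marches in steps of length $h$: using (\ref{sol_p*lambda_per}) it writes $y^*_\lambda(t)=p^*_\lambda(t-kT^*_\lambda)+k\big(d_*/\lambda+\tilde\Delta\big)+q^*_\lambda(t)$, invokes $\max_{t}p^*_\lambda=(1+\xi)t^*-d+O(1/\lambda)<c_n$ together with $d_*<0$ to keep $y^*_\lambda(t-h)<-\nu$, and then integrates $\dot q=O(e^{-\nu\lambda})$ with the initial value supplied by (\ref{W2_asymp+}), propagating the representation by induction over the steps. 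You instead bound $|\dot q^*_\lambda|\le C e^{\lambda y^*_\lambda(t-h)}$ and integrate over the whole half-line, exploiting that the $j$-th peak of $p^*_\lambda$ sits at height at most $-\delta_0+O(1/\lambda)+j\,d_*/(2\lambda)$, so the peak contributions form a geometric series with the $\lambda$-independent ratio $e^{-|d_*|/2}$. This buys something real: in the paper's induction the per-step $O(e^{-\nu\lambda})$ remainders are accumulated over infinitely many intervals of length $h$, so with only the uniform bound $y^*_\lambda(t-h)<-\nu$ the implied constant grows linearly in the step number, and the claimed uniform $O(e^{-\nu\lambda})$ on all of $[h+r_{n+m}+\sigma,+\infty)$ is not literally justified; your geometric-decay bookkeeping (equivalently, observing that the drift $k d_*/\lambda$ makes $e^{\lambda y^*_\lambda(t-h)}$ summable in time) is exactly what closes that gap. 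Two small touch-ups: the phrase ``$q^*_\lambda$ cannot increase'' is not literally true, since $f_\beta(e^{\lambda y})-1=O(e^{\lambda y})$ has no fixed sign near zero, so phrase the tail bound as a continuation (``as long as $y^*_\lambda(t-h)\le-\nu$'') argument, which is all you need; and the transversality $\dot y^*(l_k)>0$, $\dot y^*(r_k)<0$ for $k>n$, which you use for the simple-zero tracking, is tacitly built into the definitions (\ref{W1})--(\ref{W2}) and should be flagged as such.
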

\begin{proof}
   The coincidence of $q^*_\lambda(t)$ with $q^*_{\lambda,k}(t)$ on the intervals $[h+r_{k-1}+\sigma,h+r_{k}+\sigma]$, $k=n+1,\ldots,n+m$, is proven using arguments similar to those presented in the proof of Theorem \ref{th_asymp_q}. Note that the constraint on the parameter $\sigma$ needs to be specified such that the $\sigma$-neighborhoods of the inflection points and zeros of the function $y_\lambda^*(t)$ do not overlap.

We will prove that $q^*_\lambda(t)=-c_{n+m}+\frac{A_{n+m}}{\lambda}+O(e^{-\nu\lambda})$ for $t\in[h+r_{n+m}+\sigma,+\infty)$. Consider the interval $[h+r_{n+m}+\sigma,2h+r_{n+m}+\sigma]$. Here, the function $q^*_\lambda(t)$ is found from the equation $\dot{q}= f_\beta(e^{\lambda y^*_\lambda(t-h)})-1$, where $y^*_\lambda(t-h)=p^*_\lambda(t-h)+q^*_\lambda(t-h)<-\nu$ (the validity of this inequality can be ensured by the choice of $\nu$). Hence, here 
    \begin{equation}
        \label{eq_qlast}
        \dot{q}=O(e^{-\nu\lambda})
    \end{equation}
    as $t\in[h+r_{n+m}+\sigma,2h+r_{n+m}+\sigma]$. We compute the initial value
    \begin{multline*}
        q|_{t=h+r_{n+m}+\sigma}=q_{\lambda,n+m}(h+r_{n+m}+\sigma)=\\\Big(-c_{n+m}+\frac{1}{\lambda}W_{r,n+m}(\tau)|_{\tau=(t-h-r_{n+m})\lambda}+
   \frac{A_{n+m-1}+D_{4,n+m}}{\lambda}+O(e^{-\nu\lambda})\Big)\Big|_{t=h+r_{n+m}+\sigma}.
    \end{multline*}
   Taking into account the asymptotic properties (\ref{W2_asymp+}), we obtain
   \begin{equation}
        \label{init_qlast}
        q|_{t=h+r_{n+m}+\sigma}=-c_{n+m}+
   \frac{A_{n+m}}{\lambda}+O(e^{-\nu\lambda}).
   \end{equation}
   From the initial Cauchy problem (\ref{eq_qlast}), (\ref{init_qlast}), we obtain 
   \begin{equation}
       \label{sol_qlast}
       q^*_\lambda(t)=-c_{n+m}+
   \frac{A_{n+m}}{\lambda}+O(e^{-\nu\lambda})\text{ as }t\in[h+r_{n+m}+\sigma,2h+r_{n+m}+\sigma].
   \end{equation}
   We will show that this formula holds for $t\in[2h+r_{n+m}+\sigma,+\infty)$. We perform the induction step over the number of steps of length $h$.
Suppose that $q^*_\lambda(t)=-c_{n+m}+
   \frac{A_{n+m}}{\lambda}+O(e^{-\nu\lambda})$ 
   for $t\in[sh+r_{n+m}+\sigma,(s+1)h+r_{n+m}+\sigma]$, We will prove that this is true on the interval $t\in[(s+1)h+r_{n+m}+\sigma,(s+2)h+r_{n+m}+\sigma]$, $s\in\mathbb{N}$. For each $t$ here, there exists a $k$ such that $t-kT_\lambda^*\in[0,T_\lambda^*]$. Then, taking into account (\ref{sol_p*lambda_per}), we obtain
   \begin{equation}
       \label{y*k}
       y_\lambda^*(t)=p^*_{\lambda}(t)+q^*_{\lambda}(t) = p^*_{\lambda}(t-kT_\lambda^*) + k\Big(\dfrac{d_*}{\lambda} + \tilde \Delta\Big)+q^*_{\lambda}(t).
   \end{equation}
   Since $\max\limits_{t\in[0,T_\lambda^*]}p^*_{\lambda}(t)=(1+\xi)t^*-d+O(1/\lambda)$, from the condition (\ref{param}), it follows that for sufficiently large $\lambda$
   $ p^*_{\lambda}(t-kT_\lambda^*)<c_n$. Then, due to the negativity of $d_*$ and (\ref{sol_qlast}), we obtain
   $$y_\lambda^*(t)<c_n -c_{n+m}+
   \frac{A_{n+m}}{\lambda}+O(e^{-\nu\lambda})<-\nu\text{ as }t\in[sh+r_{n+m}+\sigma,(s+1)h+r_{n+m}+\sigma].$$
   This means that
$$y_\lambda^*(t-h)<-\nu\text{ for }t\in[(s+1)h+r_{n+m}+\sigma,(s+2)h+r_{n+m}+\sigma],$$
Then from equation (\ref{eq_qlast}) with the initial condition $q|_{t=(s+1)h+r_{n+m}+\sigma}=-c_{n+m}+
   \frac{A_{n+m}}{\lambda}+O(e^{-\nu\lambda})$
   on the interval $[(s+1)h+r_{n+m}+\sigma,(s+2)h+r_{n+m}+\sigma]$ we get $q^*_\lambda(t)=-c_{n+m}+
   \frac{A_{n+m}}{\lambda}+O(e^{-\nu\lambda})$. Thus, the induction step is completed.

\end{proof}

Therefore, the functions $p^*_\lambda(t)$, $q^*_\lambda(t)$, and consequently $y^*_\lambda(t)$ are constructed. The latter turned out to be asymptotically close to $p_\lambda^*(t)$ for $t\in[-\sigma,h+l_1-\sigma]$, and after a certain transitional process, it becomes negative.
Note that from the equality (\ref{y*k}) and the negativity of $d_*$, it follows that for a fixed $\lambda$ and large $k$, the function $y_\lambda^*(t)$ can take arbitrarily large negative values.

\section{Conclusion}

We proved Theorem \ref{th_exis}, which, due to the exponential substitution $v = e^{\lambda y}$, implies the following. Equation (\ref{eq_v}) has a solution $v_\lambda^*(t) = e^{\lambda y_\lambda^*(t)}$, which is asymptotically close to a periodic solution with high (exponentially large in $\lambda$) spikes over $n$ periods. Then, after a transitional process, it stabilizes as $v_\lambda^*(t) = O\left(\frac{1}{\lambda}\right)$. Thus, for equation (\ref{eq_v}), there exists a regime of a fading oscillator (see Fig.~\ref{pic:v}).

This result is obtained under the assumption $d_* < -\tilde{d}$, which guarantees that the function $v_\lambda^*(t)$ approaches zero as $t \to +\infty$. A continuation of this work could involve constructing a solution under the assumption $d_* > \tilde{d}$. Note that in this case, the term $k\left(\frac{d_*}{\lambda} + \tilde{\Delta}\right)$ in (\ref{sol_p*lambda_per}) is positive, therefore, for a fixed $\lambda$, $p^*_\lambda(t) \to +\infty$. However, as shown in section \ref{subsec_q}, for such $t$ that the sum $p^*_\lambda(t-h) + q^*_\lambda(t-h)$ is positive, the function $q^*_\lambda(t-h)$ becomes asymptotically close to an affine decreasing function (with a slope of $-\beta - 1$), requiring additional investigation to determine the behavior of the function $y^*_\lambda(t)$.

Moreover, the investigation of the periodicity of $p_\lambda^*(t)$ is beyond the scope of this paper. For this, the quantity $k\left(\frac{d_*}{\lambda} + \tilde{\Delta}\right)$ must equal zero. Deriving the conditions under which this is possible could also be a continuation of the present work.

\subsection*{Conflict of interest} The authors have no conflicts of interest.

\end{document}